\renewcommand\footnotetextcopyrightpermission[1]{} 
\newcommand{\commentout}[1]{}
\newcommand{\tup}[1]{\left<#1\right>\xspace}
\newcommand{\set}[1]{\left\{#1\right\}\xspace}
\newcommand{\brackets}[1]{\left[#1\right]\xspace}
\newcommand{\dom}[1]{dom({#1})\xspace}
\newcommand{\img}[1]{img({#1})\xspace}
\newcommand{\pfun}{\rightharpoonup}
\newcommand{\true}{true}
\newcommand{\false}{false}
\newcommand{\move}{Move\xspace}
\newcommand{\libra}{Libra\xspace}
\newcommand{\redacted}[1]{#1}
\newcommand{\mem}{\ensuremath{{\mathcal{M}}}\xspace}
\newcommand{\cl}{\ensuremath{\mathcal{L}}}
\newcommand{\rn}{\ensuremath{R}}
\newcommand{\globalstates}{\ensuremath{{GS}}\xspace}
\newcommand{\globalresid}{\ensuremath{\mathcal{G}}}
\newcommand{\varid}{\ensuremath{\mathcal{V}}}
\newcommand{\localvalues}{\ensuremath{{ LV}\xspace}}
\newcommand{\tagval}{\ensuremath{TV}}
\newcommand{\val}{\ensuremath{V}}
\newcommand{\references}{\ensuremath{Ref}}
\newcommand{\tg}{\ensuremath{T}}
\newcommand{\prival}{\ensuremath{P}}
\newcommand{\fn}{\ensuremath{F}}
\newcommand{\stackvalues}{\ensuremath{{{ SV}}\xspace}}
\newcommand{\structname}{\ensuremath{\mathcal{T}}}
\newcommand{\accadd}{\ensuremath{\mathcal{A}}}
\newcommand{\unres}{\ensuremath{{\bf U}\xspace}}
\newcommand{\mut}{\ensuremath{{\bf mut}\xspace}}
\newcommand{\immut}{\ensuremath{{\bf immut}\xspace}}
\newcommand{\legal}{\textit{legal}\,}
\newcommand{\resourcesof}[1]{\mathcal{R}({#1})\xspace}
\newcommand{\resourcesi}[1]{\mathcal{R}_{I}({#1})\xspace}
\newcommand{\resourcese}[1]{\mathcal{R}_{E}({#1})\xspace}
\newcommand{\tv}[2]{\tup{{#1},{#2}}\xspace}
\newcommand{\record}[1]{\set{#1}\xspace}
\newcommand{\typeof}[2]{{#1}:{#2}}
\newcommand{\cons}{{::}}
\newcommand{\stackht}[2]{{#1}\cons{#2}\xspace}
\newcommand{\rff}{\ensuremath{{\bf ref}}\xspace}
\newcommand{\rft}[3]{\rff~\tup{{#1}, {#2}, {#3}}\xspace}
\newcommand{\st}[1]{\tup{#1}\xspace}
\newcommand{\ao}[2]{\ensuremath{{#1}\tup{#2}}\xspace}
\newcommand{\mdel}[2]{{#1}\setminus{#2}}
\newcommand{\mset}[3]{{#1}\brackets{{#2} \mapsto {#3}}}
\newcommand{\mread}[2]{#1(#2)\xspace}
\newcommand{\lset}[3]{{#1}\brackets{{#2} \mapsto {#3}}}
\newcommand{\lread}[2]{#1(#2)\xspace}
\newcommand{\gdel}[2]{{#1}\setminus{#2}}
\newcommand{\gset}[3]{{#1}\brackets{{#2} \mapsto {#3}}}
\newcommand{\gread}[2]{#1(#2)\xspace}
\newcommand{\cmdname}[1]{{\ensuremath{\bf {#1}}}\xspace}
\newcommand{\rulename}[1]{{\ensuremath{\text{\bf {#1}}}}\xspace}
\newcommand{\movelocCmd}{\cmdname{MvLoc}}
\newcommand{\movelocRule}{\rulename{MvLoc}}
\newcommand{\movelocRefRule}{\rulename{MvLoc-Ref}}
\newcommand{\copylocCmd}{\cmdname{CpLoc}}
\newcommand{\copylocvalRule}{\rulename{CpLoc}}
\newcommand{\copylocrefRule}{\rulename{CpLoc-Ref}}
\newcommand{\storelocCmd}{\cmdname{StLoc}}
\newcommand{\storelocbotrefRule}{\rulename{StLoc}}
\newcommand{\borrowlocCmd}{\cmdname{BorrowLoc}}
\newcommand{\borrowlocRule}{\rulename{BorrowLoc}}
\newcommand{\readrefCmd}{\cmdname{ReadRef}}
\newcommand{\readrefRule}{\rulename{ReadRef}}
\newcommand{\writerefCmd}{\cmdname{WriteRef}}
\newcommand{\writerefRule}{\rulename{WriteRef}}
\newcommand{\freezerefCmd}{\cmdname{FreezeRef}}
\newcommand{\freezerefRule}{\rulename{FreezeRef}}
\newcommand{\callCmd}{\cmdname{Call}}
\newcommand{\returnCmd}{\cmdname{Ret}}
\newcommand{\packCmd}{\cmdname{Pack}}
\newcommand{\packResourceRule}{\rulename{Pack-R}}
\newcommand{\packUnrestrictedRule}{\rulename{Pack-U}}
\newcommand{\unpackCmd}{\cmdname{Unpack}}
\newcommand{\unpackRule}{\rulename{Unpack}}
\newcommand{\borrowfieldCmd}{\cmdname{BorrowField}}
\newcommand{\borrowfieldRule}{\rulename{BorrowField}}
\newcommand{\movetoCmd}{\cmdname{MoveTo}}
\newcommand{\movetoRule}{\rulename{MoveTo}}
\newcommand{\movefromCmd}{\cmdname{MoveFrom}}
\newcommand{\movefromRule}{\rulename{MoveFrom}}
\newcommand{\borrowglobalCmd}{\cmdname{BorrowGlobal}}
\newcommand{\borrowglobalRule}{\rulename{BorrowGlobal}}
\newcommand{\existsCmd}{\cmdname{Exists}}
\newcommand{\existsRule}{\rulename{Exists}}
\newcommand{\popCmd}{\cmdname{Pop}}
\newcommand{\popValRule}{\rulename{Pop}}
\newcommand{\popRefRule}{\rulename{Pop-Ref}}
\newcommand{\loadconstCmd}{\cmdname{LoadConst}}
\newcommand{\loadconstRule}{\rulename{LoadConst}}
\newcommand{\stackopCmd}{\cmdname{Op}}
\newcommand{\stackopRule}{\rulename{StackOp}}
\newcommand{\stackop}{\cmdname{Op}}
\newcommand{\branchCmd}{   \cmdname{Branch}}
\newcommand{\op}{\ensuremath{{\mathrm op}}}
\newcommand{\pc}{\ensuremath{{\mathrm pc}}}
\newcommand{\pcstate}[2]{\ensuremath{\tup{{#1},{#2}}\xspace}}
\newcommand{\pcstep}[3]{\ensuremath{{#1}\vdash{#2}\rightarrow{#3}\xspace}}
\newcommand{\pcrule}[5][]{\ensuremath{\inferrule*[Right=\rulename{#1}]{#2}{\pcstep{#3}{#4}{#5}}}\xspace}
\newcommand{\steppcRule}{\cmdname{Step}}
\newcommand{\branchtruepcRule}{\rulename{Branch-T}}
\newcommand{\branchfalsepcRule}{\rulename{Branch-F}}
\newcommand{\progloc}[2]{{#1}\brackets{#2}}
\newcommand{\sep}{\quad}
\newcommand{\step}[3]{\tup{{#1},{#2}}\!\rightarrow_{0}\!{#3}\xspace}
\newcommand{\semrule}[5][]{\ensuremath{\inferrule*[Right=\rulename{#1}]{#2}{\step{#3}{#4}{#5}}}\xspace}
\definecolor{pblue}{rgb}{0.13,0.13,1}
\definecolor{pgreen}{rgb}{0,0.5,0}
\definecolor{pred}{rgb}{0.9,0,0}
\definecolor{pgrey}{rgb}{0.46,0.45,0.48}
\definecolor{ckeyword}{HTML}{7F0055}
\definecolor{ccomment}{HTML}{3F7F5F}
\definecolor{cnumber}{HTML}{2A0099}
\lstdefinelanguage{Solidity}{
  keywords={contract, function, payable, event, msg},
  ndkeywords={address, uint},
  showspaces=false,
  showtabs=false,
  breaklines=true,
  showstringspaces=false,
  breakatwhitespace=true,
  lineskip=-0.6pt,
  morecomment=[l]{//}, 
  morecomment=[s]{/*}{*/}, 
  basewidth={0.48em, 0.4em},%
  basicstyle=\scriptsize\ttfamily,
  keywordstyle={\color{ckeyword}\ttfamily\bfseries},
  ndkeywordstyle={\color{pblue}\ttfamily\bfseries},
  commentstyle={\color{ccomment}\itshape},
  stringstyle=\color{green},
  moredelim=[il][\textcolor{pgrey}]{$$},
  moredelim=[is][\textcolor{pgrey}]{\%\%}{\%\%}
}
\lstdefinelanguage{Scilla}{
  keywords={accept, contract, emit, end, field, transition, event, send, _sender, _amount, _recipient, match, with},
  ndkeywords={Address, Map, ByStr20, Uint},
  showspaces=false,
  showtabs=false,
  breaklines=true,
  showstringspaces=false,
  breakatwhitespace=true,
  lineskip=-0.6pt,
  morecomment=[l]{//}, 
  morecomment=[s]{/*}{*/}, 
  basewidth={0.48em, 0.4em},%
  basicstyle=\scriptsize\ttfamily,
  keywordstyle={\color{ckeyword}\ttfamily\bfseries},
  ndkeywordstyle={\color{pblue}\ttfamily\bfseries},
  commentstyle={\color{ccomment}\itshape},
  stringstyle=\color{green},
  moredelim=[il][\textcolor{pgrey}]{$$},
  moredelim=[is][\textcolor{pgrey}]{\%\%}{\%\%}
}
\lstdefinelanguage{Move}{
 keywords={
abort, acquires, assert, copy, borrow_global, borrow_global_mut, create_account, fun, freeze, module, move_to, move_from, public, resource, break, use, vector
    continue,
    else,
    false,
    if,
    let, loop,
    move, mut,
    return,
    struct,
    unsafe,
    true,
    while,
  },
  ndkeywords={address, bool, u64, bytearray, Self},
  showspaces=false,
  showtabs=false,
  breaklines=true,
  showstringspaces=false,
  breakatwhitespace=true,
  lineskip=-0.6pt,
  morecomment=[l]{//}, 
  morecomment=[s]{/*}{*/}, 
  basewidth={0.48em, 0.4em},%
  basicstyle=\scriptsize\ttfamily,
  keywordstyle={\color{ckeyword}\ttfamily\bfseries},
  ndkeywordstyle={\color{pblue}\ttfamily\bfseries},
  commentstyle={\color{ccomment}\itshape},
  stringstyle=\color{green},
  moredelim=[il][\textcolor{pgrey}]{$$},
  moredelim=[is][\textcolor{pgrey}]{\%\%}{\%\%}
}
\lstdefinestyle{number}{%
  numbers=left,%
  numberstyle=\scriptsize\em,%
  xleftmargin=1em%
}
\newcommand{\solcode}[1]{\lstinline[language=Solidity,basicstyle=\small\ttfamily]{#1}}
\newcommand{\scilcode}[1]{\lstinline[language=Scilla,basicstyle=\small\ttfamily]{#1}}
\newcommand{\mcode}[1]{\lstinline[language=Move,basicstyle=\small\ttfamily]{#1}}
\newcommand{\code}[1]{\mcode{#1}}
\newcommand{\smallcode}[1]{\lstinline[language=Move,basicstyle=\scriptsize\ttfamily]{#1}}
\begin{document}

\title{Resources: A Safe Language Abstraction for Money}


\author{Sam Blackshear}
\affiliation{
  \institution{Novi}
}
\email{shb@fb.com}

\author{David L. Dill}
\affiliation{
  \institution{Novi}
}
\email{dill@fb.com}

\author{Shaz Qadeer}
\affiliation{
  \institution{Novi}
}
\email{shaz@fb.com}

\author{Clark W. Barrett}
\affiliation{
  \institution{Stanford University}
}
\email{barrett@cs.stanford.edu}

\author{John C. Mitchell}
\affiliation{
  \institution{Stanford University}
}
\email{john.mitchell@stanford.edu}

\author{Oded Padon}
\affiliation{
  \institution{Stanford University}
}
\email{padon@cs.stanford.edu}

\author{Yoni Zohar}
\affiliation{
  \institution{Stanford University}
}
\email{yoniz@cs.stanford.edu}

\begin{abstract}
\emph{Smart contracts} are programs that implement potentially sophisticated
transactions on modern blockchain platforms. In the rapidly evolving blockchain environment,
smart contract programming languages must allow users to write expressive programs that manage and transfer assets,
yet provide strong protection against sophisticated attacks.
Addressing this need, we present flexible and reliable abstractions for programming with digital currency in the \move language \cite{move_white}.
\move uses novel linear~\cite{linear_logic} \emph{resource} types with semantics
drawing on C++11~\cite{c++} and Rust~\cite{rust}: 
when a resource value is assigned to a new memory location, the location previously holding it must be invalidated.
In addition, a resource type can only be created or destroyed by procedures inside its declaring module.
We present an executable bytecode language with resources and prove that it enjoys \emph{resource safety}, a conservation property for program values that is analogous to conservation of mass in the physical world.

\end{abstract}

\begin{CCSXML}
<ccs2012>
<concept>
<concept_id>10011007.10011006.10011008</concept_id>
<concept_desc>Software and its engineering~General programming languages</concept_desc>
<concept_significance>500</concept_significance>
</concept>
<concept>
<concept_id>10003456.10003457.10003521.10003525</concept_id>
<concept_desc>Social and professional topics~History of programming languages</concept_desc>
<concept_significance>300</concept_significance>
</concept>
</ccs2012>
\end{CCSXML}

\ccsdesc[500]{Software and its engineering~General programming languages}
\ccsdesc[300]{Social and professional topics~History of programming languages}

\keywords{}  

\maketitle

\section{Introduction}
\label{sec:introduction}

The emergence of Bitcoin \cite{nakamoto} and Ethereum \cite{ethereum} has created significant interest in
the computational model of a replicated state machine synchronized by a distributed consensus protocol.
In this programming model, a command is executed as an atomic
and deterministic transaction that is replicated consistently across all nodes participating in consensus.
While cryptocurrency and decentralized finance are the most prominent applications of programmable blockchains,
there are other important use-cases such as tracking supply chains \cite{HBR-global-supply-chain}
and clearing global markets \cite{JPMcoin}.


Transactions are programmed as \emph{smart contracts,}
a catchy name \cite{szabo_smart_contracts} for program units installed for atomic execution on the blockchain.
If the contract language is sufficently expressive, then
smart contracts are attractive implementaions for a wide variety of conventional
functions such as bank deposit and withdrawal, cross-border funds transfer, point-of-sale online payment,
escrow agreements, futures contracts, and derivatives.
To meet these goals, a smart-contract programming language must allow users to write programs that
manage and transfer assets while providing extremely trustworthy protection against sophisticated attacks.

In this paper, we describe and analyze flexible and reliable abstractions for programming with
digital currency and other assets in the \move language \cite{move_white}.
\move uses novel linear~\cite{linear_logic} \emph{resource} types that draw on
experience with C++11~\cite{c++} and Rust~\cite{rust}
to preserve integrity and prevent copying of assets.
When combined with other abstraction features of \move, linearity ensures resource conservation.
Whereas data abstraction ensures that
a resource may only be created and destroyed by the defining module,
linearity further prevents duplication and unintended loss.
We present an executable \move{} bytecode language with move semantics and
show that it satisfies a set of \emph{resource safety} guarantees.

\paragraph{Contributions}
This paper adds rigor to the informal description of \move \cite{move_white}.
Its key contributions are:
\begin{itemize}
\item We introduce \emph{resources}, an intuitive abstraction for currency-like values, and demonstrate their utility compared to existing language constructs (\Cref{sec:overview}).
\item We explain the key features of the \move bytecode language and explain how their design supports support resource-oriented programming (\Cref{sec:move_overview})
\item We formalize the semantics of the \move bytecode interpreter for the subset of \move{} analyzed in this paper (\Cref{sec:bytecode_formal}).
\item We formally define \emph{resource safety} properties and prove that execution of \move bytecode programs is resource-safe (\Cref{sec:resource_safety}).
\item We describe our implementation of the \move virtual machine, its integration in the \libra blockchain~\cite{libra_blockchain_white}, and the adoption of \move in other contexts (\Cref{sec:experience}).
\end{itemize}

\section{Programming With Money}
\label{sec:overview}

\move{} is designed to support a rich variety of economic and financial activities by supporting fundamental conservation properties,
not only for built-in currencies, but also for programmer-defined assets.
We believe this is essential.
To begin with, smart contracts provide customizable logic for sending, receiving, storing, and apportioning digital funds
that cannot be arbitrarily created, lost, or destroyed.
Further, the internal balance in a bank account, the monetary value inherent in a contract for future payment, or an escrow contract
all represent assets that must be conserved in the same ways as conventional currency.
Thus, smart contracts must be able to implement new assets with expected conservation properties and appropriately control the exchange of one asset for another.

\subsection{Savings Bank Example}
With this goal in mind, we use a simple bank account contract to illustrate the key features of \move for programming with assets and
demonstrate by example the advantages of \move over two alternative contract programming languages where notable problems have occurred in practice.
\Cref{fig:bank} implements a savings bank with the following requirements:
\begin{itemize}
\item A customer should be able to deposit money worth $N$ via the \code{deposit} procedure and subsequently extract money worth $N$ via the \code{withdraw} procedure.
\item No customer should be able to withdraw money deposited by another customer.
\end{itemize}

Even in this simplest of examples, there are already \emph{two} assets: the funds deposited into the bank contract, and the bank credit that the customer can use to withdraw the funds in the future.
Most smart contract platforms have a \emph{native asset} such as \emph{Ether} in Ethereum \cite{ethereum} that is implemented as part of the core platform and guarantees conservation.
But even if the deposited funds are represented using the native asset, the bank contract must correctly implement \code{deposit} and \code{withdraw} to ensure conservation for the bank credit asset.
Programming mistakes in this setting can be extremely costly;high-profile bugs in Ethereum, \emph{e.g.,} \cite{re_dao, parity_hack,eth_vulns}, have resulted in the theft of digital assets worth tens of millions of dollars.
To summarize, programming challenges in this environment include:
\begin{enumerate}
\item \textbf{Conservation.} Transfers must preserve the total supply of money in the system, including custom assets defined by contracts.

\item \textbf{Unique atomic transfer.} The sender of an asset must relinquish all control of the asset.
This ownership transfer should be \emph{atomic} because any non-atomic exchange risks leaving one or both parties empty-handed.

\item \textbf{Authority.} Smart contract programmers must represent authority carefully and restrict access to privileged operations. Contracts are deployed on a public platform open to both benign customers and bad actors.
\end{enumerate}

\move represents money using user-defined linear \emph{resource} types.
\move has ordinary types like integers and addresses that can be copied, but resources can only be moved.
Linearity prevents ``double spending'' by moving a resource twice (e.g., into two different callees)
and forces a well-typed procedure to move all of its resources, avoiding accidental loss.

\Cref{fig:bank} provides a \move representation of the simple bank along with an implementation in Solidity~\cite{solidity} and Scilla~\cite{scilla}.
Solidity is a source language for Ethereum~\cite{ethereum} and the first to provide an expressive smart contract programming model.
Scilla is a newer language designed by programming language researchers to simplify formal verification of contracts
and incorporate lessons learned from Solidity design flaws.
Although many other contract languages have been proposed (see \Cref{sec:related_work}), these two
represent the state of practice (Solidity) and the state of the art (Scilla).

Solidity, Scilla and other \emph{account-based} languages often use a model in which each contract has an implicit balance in the platform's native currency.
This balance can only be modified by special instructions. However, the properties ensured by these special instructions
are not available to programmers that wish to implement custom currencies such as bank credits.
A common strategy used instead is illustrated in \Cref{fig:bank}:
a map, \code{credit}, is employed to map creditor identities to integers.
The integers in the range of the map represent money and must be manipulated carefully to provide the global conservation invariants
associated with monetary assets. However, as we will see by examining the code samples, properties guaranteed by construction in \move are more difficult to ensure via ad hoc programming in other languages.
Although the bank is a somewhat artificial example, it is adapted from similar examples in the Solidity/Scilla documentation and concisely captures the key idioms of typical contracts: sending/receiving/atomically exchanging money and implementing a new money-like construct.

\begin{figure*}[t]
\centering
\begin{tabular}{ccc}
{
\begin{lstlisting}[basicstyle=\scriptsize\ttfamily, language=Solidity]
contract Bank
mapping (address => uint) credit;

function deposit() payable {
  amt =
    credit[msg.sender] + msg.value
  credit[msg.sender] = amt
}

function withdraw() {
  uint amt = credit[msg.sender];
  msg.sender.transfer(amt);
  credit[msg.sender] = 0;
}
\end{lstlisting}
}&{
\begin{lstlisting}[basicstyle=\scriptsize\ttfamily,language=Scilla]
contract Bank
field credit: Map Address Uint;

transition deposit()
  accept;
  match credit[_sender] with
  Some(amt) =>
    credit[_sender] :=
      amt + _amount
  None =>
    credit[_sender] := _amount
  end
end

transition withdraw()
  match credit[_sender] with
  Some(amt) =>
    msg = {
      _recipient: _sender;
      _amount: amt
    };
    credit[_sender] := 0;
    send msg
  None => ()
  end
end
\end{lstlisting}
}&{
\begin{lstlisting}[basicstyle=\scriptsize\ttfamily,language=Move]
module Bank
use 0x0::Coin;
resource T { balance: Coin::T }
resource Credit { amt: u64, bank: address }

fun deposit(
  coin: Coin::T,
  bank: address
): Credit {
  let amt = Coin::value(&coin);
  let t = borrow_global<T>(copy bank);
  Coin::deposit(&mut t.balance, move coin);
  return Credit {
    amt: move amt, bank: move bank
  };
}

fun withdraw(credit: Credit): Coin::T {
  Credit { amt, bank } = move credit;
  let t = borrow_global<T>(move bank);
  return Coin::withdraw(
    &mut t.balance, move amt
  );
}
\end{lstlisting}
}
\end{tabular}

\caption{A simple bank contract in Solidity (left), Scilla (middle), and \move (right). Each code snippet must implement bidirectional exchanges of the language's native currency for a bank credit currency defined by the contract. In Solidity and Scilla, both native and custom currencies are represented indirectly via maps of identities to integers, whereas in \move, currency is represented directly with resources.}
\label{fig:bank}
\end{figure*}

\paragraph{Solidity and Scilla \code{deposit}}
The first task of the \code{deposit} procedure in \Cref{fig:bank} is to accept the language's native currency.
In Solidity, native currency sent by a caller is implicitly deposited into the contract's balance before the callee code is executed,
provided the receiving function is marked as \solcode{payable}.
If not, an attempted deposit causes a runtime failure that reverts all changes performed by the current transaction.

In Scilla, money is transferred from caller to callee via an explicit \scilcode{accept} construct which
avoids runtime failures but introduces other problems.
Although money not accepted by the callee will be silently returned,
bugs may occur if the programmer forgets to \scilcode{accept} funds.
For example, accepting on one control-flow path but not another (e.g., only in the \scilcode{None} branch)
would allow the caller to steal funds deposited by another user by subsequently invoking \scilcode{Withdraw}.

The second task of \code{deposit} is to update the caller's bank credits by the transferred amount.
In both languages, the amount sent by the caller is available through special integer-typed expressions:
\solcode{msg.value} in Solidity and \scilcode{_amount} in Scilla.
The identity of the caller is represented by \solcode{msg.sender} or \scilcode{_sender}, respectively.
\emph{The programmer must be careful to increment the caller's credit balance by the transferred quantity exactly once.
Forgetting to update the balance is stealing funds from the caller,
whereas updating more than once allows the caller to steal funds from other customers.}
There are no special checks on integer expressions to prevent either programmer
error from violating conservation of funds.

\paragraph{Solidity and Scilla \code{withdrawal}}
The \code{withdraw} procedure exchanges bank credits for native currency.
Although this is logically the inverse of \code{deposit}, the implementation looks quite different.
This is because Solidity and Scilla do not have language support for returning native or custom currency to the calling procedure.
Instead, the code uses language primitives for sending currency to the \emph{address} that stores the contract whose procedure invoked \code{withdraw}.

In Solidity, the relevant primitive is \solcode{msg.sender.transfer}.
Subtly, this is a virtual call that invokes a user-defined procedure known as a \emph{fallback function} in the callee.
The decision to make every payment of native currency a virtual call has led to infamous \emph{re-entrancy} vulnerabilities such as the DAO \cite{re_dao} attack that led to theft of digital assets worth over \$60 million.
\emph{The key issues are that (a) the update to the credit map via \solcode{credit[msg.sender] = 0} and the sending of funds via \solcode{transfer} are not atomic, and (b) the map update occurs after the virtual \solcode{transfer} call.}
If the virtual call invokes a user-defined function that calls back into \code{withdraw}, the caller can steal funds deposited by a different customer.

Scilla improves on Solidity by defining a more restricted message-passing primitive for sending money to addresses.
The \scilcode{_amount: amt} code snippet implicitly withdraws \code{amt} units
of money from the contract's available balance.
Then, the \scilcode{_sender}'s balance in the credit map is zeroed out before using the \scilcode{send} primitive to transfer the money to its recipient.
Scilla's type system forces any global side effect like a message \scilcode{send} to occur at the end of the procedure; for our example, it would not allow the update to \scilcode{credit} to occur after the \scilcode{send}.
In addition, Scilla does not have virtual calls. These restrictions prevent re-entrancy issues.

However, the Scilla design introduces a new kind of issue:
\emph{using \scilcode{emit msg} instead of \scilcode{send msg} in the example would cause the money in the message to be destroyed.}
The \scilcode{emit} construct emits the message as a client-facing event rather than sending it to an address.
This mistake permanently reduces the supply of money in the system.
Scilla programmers have encountered this problem in practice (\cite{scilla}, Section 5.2),
though Scilla has an auxiliary ``cashflow'' static analyzer for detecting problems like this.

\paragraph{\move Bank}
The \move implementations of the \code{deposit} and \code{withdraw} procedures are symmetric.
The \code{deposit} procedure says that it requires payment by declaring a parameter of type \code{Coin::T} and that it intends to credit the caller by declaring a return value of type \code{Bank::Credit}.
The \code{withdraw} procedure does the inverse.
\code{Coin::T} represents native currency; it is a resource type defined in a separate \code{Coin} module that we describe in \Cref{sec:core_modules}.
Since both \code{Coin::T} and \code{Bank::Credit} are resources, the type system will reject any implementation that fails to consume the input resource or return ownership of the output resource.
Both resources can leverage the same language feature (move semantics) for atomic ownership transfer into and out of the procedure.

The \code{deposit} code consumes its input resource by acquiring a reference to a \code{Bank::T} value published in global storage and moving the \code{coin} resource into the bank's balance via the call to \code{Coin::deposit}.
It then \emph{packs} (constructs) a \code{Credit} resource and returns it to the caller.

The \code{withdraw} code consumes its input \code{Credit} resource by \emph{unpacking} it.
Unpacking destroys a resource and returns its contents.
Only the \code{Bank} module can pack, unpack, and acquire references to the fields of the \code{Credit} resource; code outside the module can only access \code{Credit} through the procedures exposed by \code{Bank}.
Finally, \code{withdraw} extracts native currency from the bank's balance via \code{Coin::withdraw} and returns it to the caller.

\paragraph{Resources as Capabilities}
We conclude our discussion of the \move bank by noting that the advantage of an explicit type for money goes beyond safety: resources enable flexible programming patterns that would not be possible with an implicit representation of money.
For example: say that Alice is a customer of the \code{Bank} and wants to give another user Bob permission to withdraw the funds she has deposited.
Alice can simply transfer ownership of her \code{Bank::Credit} to Bob, who can use it to invoke \code{withdraw} at his leisure---no change to the \code{Bank} code is required.
Bob could also choose to store his \code{Bank::Credit} in another resource that (e.g.) allows multiple parties to access it or prevents it from being redeemed until a certain time.

By contrast, the Solidity and Scilla implementations of the \code{Bank} cannot support this feature without modifying the original contract to support it.
In essence, the \code{credit} map approach implements an access control list for withdrawing native currency, whereas the resource approach implements a linear \emph{capability} for withdrawals~\cite{Hardy94theconfused, capability-myths, DBLP:journals/pacmpl/Swasey0D17}.
Capability-based programming enables some powerful design patterns, as we will see in \Cref{sec:core_modules}.

\section{Move Overview}
\label{sec:move_overview}
This section provides an informal overview of the key concepts and design decisions of the \move language
that support safe and expressive programming with resources.

\subsection{Executable Bytecode With Resources}
\label{sec:bytecode}

The \move execution platform relies on a compiler to transform source language
programs into programs in the \move bytecode language. For example, \Cref{fig:bank} contains \move source code that compiles to an executable bytecode representation (see \Cref{fig:source_to_bytecode} for an example). Bytecode -- not source code -- is stored and executed on the \libra blockchain.

Because \move programs are deployed in the open alongside other (potentially untrusted) \move programs, it is important for key properties like resource safety to hold for \emph{all} \move bytecode programs. If the safety guarantees were only enforced by the source language compiler, an adversary could subvert them by writing malicious bytecode directly and entering it into the execution environment without using a compiler. Thus, we focus on the design and semantic properties of the \move bytecode language here, although we write illustrative examples in the source language for readability.

The \move execution platform relies on a load-time \emph{bytecode verifier},  in a manner similar to the Java Virtual Machine~\cite{jvm} and Common Language Runtime~\cite{clr}.
The bytecode verifier enforces type, memory, and resource safety.
Because the goal of the present paper is to explain and formalize properties of \move{}
that provide key advantages over prior smart contract languages (i.e. resource values with ironclad safety guarantees),
we focus on a concrete semantics for \move with dynamic checks for type, resource,
and memory safety and leave formalization of the bytecode verifier to future work.
Our formalization and resource safety theorem (\Cref{thm:resourcesafe}) therefore do not depend on any of the invariants ensured by the bytecode verifier; the presence of the verifier just allows an optimized implementation to skip these checks.
The analyses performed by the bytecode verifier are sufficiently interesting and complex to fill a paper of their own (particularly reference safety, which has similarities to the Rust borrow checker; see \Cref{sec:related_work}).

\paragraph{Persistent Global State}

\begin{figure}
\scriptsize
\begin{tabular}{lr}
\begin{minipage}{0.40\linewidth}
\begin{tabular}{ll}
Bytecode & Source code \\
\hline
$\ao{\movelocCmd}{x_0}$ & \smallcode{move credit} \\
$\ao{\unpackCmd}{s_1}$ & \smallcode{Credit \{amt, bank\}=...} \\
$\ao{\borrowglobalCmd}{s_0}$ & \smallcode{borrow_global<T>(...)} \\
$\ao{\borrowfieldCmd}{f_0}$ & \smallcode{&mut t.balance} \\
$\ao{\callCmd}{h_0}$ & \smallcode{Coin::withdraw(...)} \\
$\returnCmd$ & \smallcode{return} \\
\end{tabular}
\end{minipage}
&
\begin{minipage}{0.60\linewidth}
\includegraphics[scale=0.16]{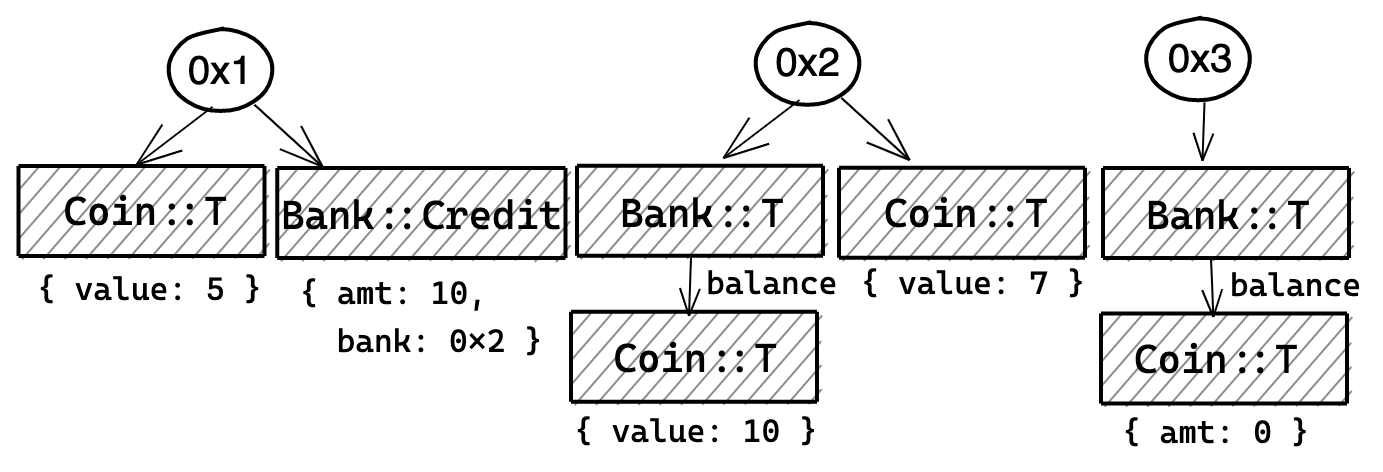}
\end{minipage}
\end{tabular}
\normalsize
\caption{Bytecode for the \code{withdraw} procedure from \Cref{fig:bank} (left) and an example global state containing resources from \Cref{fig:bank} (right). The global state contains three account addresses with different combinations of resources. Address \code{0x1} has a \code{Coin} resource with value 5 and a \code{Credit} with value 10 that can be redeemed at the \code{Bank} resource owned by \code{0x2}. Address \code{0x3} also has a \code{Bank} resource, but it holds a \code{Coin} with value 0.}
\label{fig:source_to_bytecode}
\end{figure}

\move execution occurs in the context of a persistent global state organized as a partial map from \emph{account addresses} to resource data values.
Each address can store an arbitrary number of resources, but at most one of any given type at the top level. For example, the account address \code{0x} in \Cref{fig:source_to_bytecode} holds two \code{Coin::T} resources, but one is at the top level and one is stored inside a \code{Bank::T} resource.

In addition, an address can store zero or more code \emph{modules}.
The global state is updated via \emph{transactions} that contain a sender account address and a \emph{transaction script} consisting of a single \code{main} procedure.
Transaction scripts update the global state by invoking procedures of published modules that mutate stored resources, add new resources to an address, or remove existing resources from an address.
A transaction has all-or-nothing semantics; either the entire script is executed without errors or it \emph{aborts} and reverts all changes to the global state.

\paragraph{Procedure Calls}
\begin{figure}
\begin{tikzpicture}
  \tikzstyle{frame}=[draw, fill=blue!20, text width=4em,
    text centered, minimum height=2.5em]
\tikzstyle{newFrame}=[draw, dashed, fill=blue!20, text width=4em,
    text centered, minimum height=2.5em]
\tikzstyle{stack} = [frame, text width=3.5em, fill=red!20,
    minimum height=12em, rounded corners]
\tikzstyle{glob} = [frame, text width=3.5em, fill=green!20,
    minimum height=12em]

\node (opstack) [stack] {Operand\\ stack};
\path (opstack) node (locals1) [frame, left=2cm of opstack] {Locals$_{i}$};
\path (opstack) node (locals0) [newFrame, above=0.5cm of locals1] {Locals$_{i+1}$};
\path (opstack) node (locals2) [frame, below=0.5cm of locals1] {\ldots};
\path (opstack) node (globals) [glob, right=2cm of opstack] {Global\\ resources};

\draw[->, thick, dotted] (opstack.west |- locals0.south) -- node [above] {\callCmd} (locals0.south east);
\draw[->, thick, dotted] (locals0.north east) -- node [above] {\returnCmd} (opstack.west |- locals0.north);

\draw[->, thick] (opstack.west |- locals1.south) -- node [below] {\storelocCmd} (locals1.south east);
\draw[->, thick] (locals1.north east) -- node [above] {\movelocCmd\ \copylocCmd} (opstack.west |- locals1.north);
\draw[->, thick] (locals1.north east) -- node [below] {\borrowlocCmd} (opstack.west |- locals1.north);

\draw[->, thick] ([yshift=0.5cm]opstack.east) -- node [above] {\movetoCmd} ([yshift=0.5cm]globals.west);
\draw[->, thick] ([yshift=-0.5cm]globals.west) -- node [above] {\movefromCmd} ([yshift=-0.5cm]opstack.east);
\draw[->, thick] ([yshift=-0.5cm]globals.west) -- node [below] {\borrowglobalCmd} ([yshift=-0.5cm]opstack.east);
\end{tikzpicture}
\caption{Execution mechanics of the \move bytecode interpreter. The global state holds resources that can be moved onto the operand stack or borrowed by pushing a reference onto the stack. Resources can be published to the global state by moving them from the stack into an account address. Each call stack frame (blue) has its own local variables to store values popped off the stack. Formal parameters and return values are passed between caller and callee using the shared operand stack.}

\label{fig:interpreter_mechanics}
\end{figure}

\begin{figure}
\begin{tabular}{|ll|}\hline
local variable instructions & \ao{\movelocCmd}{x} $|$ \ao{\copylocCmd}{c} $|$ \ao{\storelocCmd}{x} $|$ \ao{\borrowlocCmd}{x} \\
reference instructions & \readrefCmd $|$ \writerefCmd $|$ \freezerefCmd \\
record instructions & \packCmd $|$ \unpackCmd $|$ \ao{\borrowfieldCmd}{f} \\
global state instructions & \ao{\movetoCmd}{s} $|$ \ao{\movefromCmd}{s} $|$ \ao{\borrowglobalCmd}{s} $|$ \ao{\existsCmd}{s} \\
stack instructions & \popCmd $|$ \ao{\loadconstCmd}{a} $|$ \stackopCmd \\
procedure instructions & \ao{\callCmd}{h} $|$ \returnCmd  \\
\hline
\end{tabular}
\caption{List of \move instructions. The local variable instructions move or copy values between local variables and the operand stack. Reference instructions operate on reference values stored on the operand stack. The global state instructions move values between the operand stack and persistent global storage. Stack instructions manage the operand stack by popping unused values, pushing constants, and performing arithmetic/bitwise operations via \stackopCmd. Finally, the procedure instructions create and destroy call stack frames.}
\label{fig:instructions}
\end{figure}

Execution of a \move program begins by executing the distinguished \code{main} procedure of the transaction script and proceeds via the evaluation mechanics shown in \Cref{fig:interpreter_mechanics}.
A procedure is defined by a type signature and an executable body comprising a linear sequence of \move bytecode commands.
Procedure calls are implemented using a standard call stack containing frames
with a procedure name, a set of local variables, and a return address.
When one procedure calls another, the
calling procedure pushes its callee's arguments onto the operand stack and
invokes the \callCmd bytecode command, which pops the arguments off the stack and stores them in the actuals of the callee
(which become a subset of the callee's local variables).
Before returning, the callee pushes its return values on the stack and invokes
the \returnCmd bytecode command, which pops the current stack frame and returns control to the return address.

\paragraph{Modules}
A \move module such as our \code{Bank} from \Cref{fig:bank} can declare both record types and procedures.
Records can store primitive data values (including booleans, unsigned integers, and account addresses) as well as other record values, but not references.
Each record is nominally declared as a resource or non-resource.
Non-resource records cannot store resource records, and only resources can be stored in the global state.

Modules support strong encapsulation for their declared types.
Consider the bytecode translation of the \code{withdraw} procedure from our running example shown in \Cref{fig:source_to_bytecode}.
The struct definitions $s_i$ and field definitions $f_i$ used by the bytecode instructions
are implemented as integer indexes into internal tables of the current module.
This design ensures that privileged operations on the module's declared types can only be performed by procedures in the module,
encapsulating creation via \packCmd, destruction via \unpackCmd, accessing fields via \borrowfieldCmd, publishing via \movetoCmd, removing via \movefromCmd, and accessing (either to read or write) via \borrowglobalCmd.
For example: the \code{withdraw} bytecode is able to access a field of its declared \code{T} type, but it would not be able to access a field of the \code{Coin::T} type except via the API exposed by the \code{Coin} module.

A module may import a type or procedure declared in another module using its storing address as a namespace.
For example, the \code{use 0x0::Coin} line from our running example indicates that the current module should link against the module named \code{Coin} stored at account address \code{0x0}.
The combination of encapsulation and resource safety enables modules to safely interoperate while maintaining strong internal invariants.

\paragraph{References}
Move supports references to records and primitive values (but not to other references). In a manner similar to Rust, references are either exclusive/mutable (written \code{&mut}) or read-only (written \code{&}). All reads and writes of record fields occur through a reference.

References are different from other Move values because they are transient: as explained above, persistent global state consists of resource records, which cannot have fields of reference type. This means that each reference must be created during the execution of a transaction script and released before the end of that transaction script. Thus, each individual record value is a tree, and the global state is a forest whose roots are account addresses.

\subsection{Language Design for Resource Safety}
\label{sec:language_design}

\begin{figure}[t]
\centering
\begin{tabular}{cc}
\begin{minipage}{0.5\linewidth}
{
\begin{lstlisting}[language=move,basicstyle=\scriptsize\ttfamily]
resource R {}
fun copy_resource_bad(r: R) {
  let x = copy r; // no; copies r
}
fun deref_resource_bad(ref: &R): R {
  return *ref; // no; copies target of ref
}
\end{lstlisting}
}
\end{minipage}
&
\begin{minipage}{0.5\linewidth}
{
\begin{lstlisting}[language=move,basicstyle=\scriptsize\ttfamily]
resource R {}
fun double_move_bad(r: R): R {
  let x = move r;
  let y = move r; // no; r already moved
  let z = move_from<R>(0x1);
  return move_from<R>(0x1); // no; 0x1.R moved
}
\end{lstlisting}
}
\end{minipage}
\\[-5pt]\\\hline
\\[-5pt]
\begin{minipage}{0.5\linewidth}
{
\begin{lstlisting}[language=move,basicstyle=\scriptsize\ttfamily]
resource R {}
fun destroy_via_assign_bad(r1: R, r2: R) {
  let loc = move r1;
  loc = move r2; // no; destroys old value of loc
}
fun destroy_via_write_bad(ref: &mut R, r: R) {
  *ref = move r; // no; destroys target of ref
}
fun unused_resource_local_bad(r: R) {
  let local = move r;
  return; // no; would destroy resource in local
}
\end{lstlisting}
}
\end{minipage}
&
\begin{minipage}{0.5\linewidth}
{
\begin{lstlisting}[language=move,basicstyle=\scriptsize\ttfamily]
resource R {}
fun double_move_to_bad(r1: R, r2: R)) {
   move_to<R>(0x1, move r1);
   move_to<R>(0x1, move r2); // no; overwrites r1
}

\end{lstlisting}
}
\end{minipage}
\end{tabular}
\caption{Top: Examples of bad \move code that must be rejected due to resource duplication. Bottom: examples of bad \move code that must be rejected due to destruction of resources. The programs on the left would all be accepted if type \code{R} was declared as a \code{struct} instead of a \code{resource}, or if \code{R} was replaced with a primitive type like \code{u64}.}
\label{fig:bad_programs}
\end{figure}

At the beginning and end of a transaction script, all of the resources in the system reside in the global state $\globalstates$.
Resource safety is a conservation property that relates the set of resources present in state $\globalstates_{pre}$ before the script
to the set of resources present in state $\globalstates_{post}$ after the script.
In general terms, we would like the language to guarantee that:
\begin{enumerate}
\item A resource \code{M::T} that is present in post-state $\globalstates_{post}$ was also present in pre-state $\globalstates_{pre}$ unless it is introduced by a \packCmd inside \code{M} during script execution
\item A resource \code{M::T} that was present in pre-state $\globalstates_{pre}$ is also present in post-state $\globalstates_{post}$ unless it is eliminated by an \unpackCmd inside \code{M} during script execution
\end{enumerate}

It is helpful to look at each of the instructions in \Cref{fig:instructions} and consider what precautions must be taken in order to ensure that properties (1) and (2) hold. For property (1), we must be careful not to introduce instructions that can duplicate a resource value. \move achieves this by providing both \movelocCmd and \copylocCmd instructions for transferring a value from a local variable to the operand stack. As the \code{copy_resource_bad} function in \Cref{fig:bad_programs} demonstrates, the \copylocCmd instruction cannot be applied to a resource value. The \movelocCmd, \movetoCmd, and \movefromCmd instructions for transferring values prevent double moves that would allow a programmer to ``spend'' the same resource multiple times (see \code{double_move_bad}).

References must also be managed carefully to avoid duplication. The \readrefCmd for dereferencing a reference value can only be applied to a non-resource reference. Allowing a dereference of a resource like \code{deref_resource_bad} in \Cref{fig:bad_programs} would copy the resource value behind the reference.

Property (2) is further challenging because conventional languages provide a number of ways to indiscriminately discard values.
At the instruction level, restrictions must be placed on \popCmd, \storelocCmd, \writerefCmd, and \movetoCmd.
Most obviously, popping a resource off the operand stack with \popCmd must be disallowed.
If a local variable is of type resource, \storelocCmd can only be applied when the variable is uninitialized.
Code like \code{destroy_via_assign_bad} in \Cref{fig:bad_programs} would violate property (2) by discarding the old value stored in the local.
Similarly, a \writerefCmd like \code{*ref = move r} in \code{destroy_via_write_bad} must not execute if \code{ref} points to a resource.
This destructive update would destroy the value previously pointed to by \code{ref}.
Finally, the \movetoCmd instruction for moving a resource into global storage aborts if the move would overwrite an existing resource at the given address. For example, \code{double_move_to_bad} would fail at runtime because the memory \code{0x1.R} is already occupied.

Instruction-level protections are not quite enough to ensure property (2). There are two remaining holes that could allow resource destruction: values left in local variables when a procedure returns (e.g., \code{unused_resource_local_bad} in \Cref{fig:bad_programs}), and values left on the operand stack at the end of script execution. \move prevents both with extra discipline in the calling convention:
\begin{itemize}
\item The values on the operand stack match the types of formal parameters/return values before a \callCmd/\returnCmd (respectively).
\item \returnCmd cannot be invoked if a local variable holds a resource value or the operand stack holds extra (non-return) values.
\item A script terminates in a non-aborting state only when both the call stack and operand stack are empty.
\end{itemize}

The reader might wonder: can resources left on the stack and in locals be destroyed by a mid-script abort? This would be indeed be a problem in a conventional language, but the all-or-nothing semantics of \move transactions saves us. An aborting transaction script evaluates to the pre-state of the script, at which point all resources reside safely in global storage.

\paragraph{What Resource Safety Accomplishes for Programmers}
At this point, it's worthwhile to take a step back and briefly discuss what resource safety does and does not guarantee.
For concreteness, let's consider our running example in \Cref{fig:bank} once more.
Resource safety would \emph{not} preclude an implementation of \code{deposit} whose first line was \code{let amt = 7}; that is, it cannot protect the programmer from mistakes in implementing a custom asset.
It does, however, isolate and localize such decisions.
For example, it prevents the \code{Bank} from violating the invariants established for the imported \code{Coin::T} type inside its own declaring module.

This observation suggests a clear division of responsibilities.
It is the module author's job to define and correctly implement safety invariants for the types inside her module.
Once she has done so, encapsulaton and resource safety will ensure that her local invariants are also global invariants---no possible client can ever violate them (similar to the ``robust safety'' of \cite{DBLP:journals/pacmpl/Swasey0D17}).
This is quite powerful because \move modules give programmers an unusual amount of
control over declared types (e.g., restricting publishing and destroying types
as described above), and this control can be used to establish strong invariants.
For example, it is possible to define a type that can only be created after a certain time, a type that can never be destroyed, or a type that can only be created by a caller that has paid ten coins.
In \Cref{sec:core_modules}, we will show how resource safety allows us to establish global conservation of native currency in the \libra platform via a local invariant of the \code{Coin} module.

\section{\move Bytecode Interpreter}
\label{sec:bytecode_formal}

Next, we present operational semantics for a call-free subset of the \move bytecode that simulates a single transaction of arbitrary length. Generalizing to multiple transactions with procedure calls is conceptually straightforward, but would be significantly less concise.
This semantics will be used in \Cref{sec:resource_safety} to formalize and prove resource safety.

As explained in \Cref{sec:bytecode}, \move uses a bytecode verifier to ensure type safety and memory safety of smart contracts. Our formalism here, focusing on resource safety, does not depend on the bytecode verifier. Instead, our semantics gets stuck in errournous states, e.g. when encountering a dangling reference or an ill-typed operation.
The bytecode verifier ensures additional invariants (e.g., no dangling references, well-typedness) such that programs that pass the bytecode verifier cannot get stuck
due to memory or type errors.
As a result, our resource safety theorem (\Cref{thm:resourcesafe}) does not depend on the bytecode verifier.

We will begin with preliminary definitions and notation
for values, types, memory, and persistent global state, before introducing evaluation rules.
The notation is summarized in \Cref{tab:defs}.

\begin{figure*}
\begin{tabular}{|ll|}\hline
locations & \cl \\
primitive data values & \prival \\
addresses & $\accadd\subseteq \prival$  \\
resource types & \structname \\
resource tags & \rn \\
tags & $\tg=\rn\uplus \set{\unres}$\\
field names  & \fn \quad (finite)\\
paths & $\fn^{\ast}$ \\
values & \val \quad (see \Cref{def:valtagval})\\
tagged values & \tagval \quad (see \Cref{def:valtagval})\\
record values & $\val\setminus\prival$ \\
memories & $\mem=\cl\pfun\tagval$  \\
references & $\references=\cl\times \fn^{\ast} \times \set{\mut,\immut}$  \\
stack values & $\stackvalues=\tagval\cup\references$\\
local values & $\localvalues=\cl \cup\references$\\
local variables & \varid \\
local states & $\mem\times(\varid\pfun\localvalues)\times \stackvalues^{\ast}$ \\
global resource ids & $\globalresid=\accadd\times\structname$ \\
global states & $\globalstates=\mem\times(\globalresid\pfun\cl)\times(\varid\pfun\localvalues)\times \stackvalues^{\ast} $ \\
program locations & $PC$ \\
program states & $PC\times\globalstates$\\\hline

\end{tabular}
\caption{\label{tab:defs}Definitions for the semantics of \move. For a set $X$,
$X^{\ast}$ denotes the set of (finite) lists of elements from $X$.}
\end{figure*}

\subsection{Definitions and Notation}

\paragraph{Notation for partial functions and lists}
We use standard operations on partial functions (used to represent record values
or mappings in  local and global states); operations on lists are similarly standard
and used in several ways.

Following common convention,
if $f:A\pfun B$ is a partial function from $A$ to $B$, then $\dom{f}$ is the set of all $a\in A$
for which $f(a)$ is defined, and $\img{f}$ is the set of all $b\in B$ for which
$f(a)=b$ for some $a\in A$.  We use $\lset{f}{a}{b}$ to denote the function
that is equivalent to $f$ on every input except $a$ and which maps $a$ to $b$.
Similarly, $\mdel{f}{a}$ is the partial function equivalent to $f$ except that it is
undefined at $a$.

We use lists to represent a sequece of field accesses
and in components of semantic states.
We write $[]$ for the empty list and $e::l$ for the result of placing $e$ at the front of list $l$.
Similarly, $l::e$ is the list with $e$ appended to $l$ and, by slight abuse of notation,
$l::l'$ the concatenation of lists $l$ and $l'$.

\paragraph{Values and their types}
We begin with primitive types, field names, and tags, using these three elements to
define the values used in computation.
While tags are used to state and prove semantic properties,
tags are not needed in the \move virtual machine.

Let $\prival$ be the set of primitive data values, including Booleans, integers, and addresses,
$\fn$ a fixed, finite set of \emph{field names}
and  $\tg$ the set of \emph{tags}, where each tag may be a resource tag from a set $\rn$ or
the distinguished element $\unres$ indicating a value that is not a resource.

\begin{definition}
\label{def:valtagval}
The set $\val$ of \emph{values} and the set $\tagval$ of \emph{tagged values}
are defined together from the primitive values, tags  and field names
as the least sets satisfying:
\begin{enumerate}[(i)]
\item $\prival \subseteq \val$;
\item for every $v\in\val$ and $t\in\tg$, $\tv{v}{t}\in\tagval$; and
\item if $n\geq 1$, $f_1\dots f_n\in\fn$ are pair-wise distinct, and $tv_1\dots tv_n\in \tagval$, then
  $\{(f_i,tv_i)|1\le i \le n\} \in \val$.
\end{enumerate}
The values arising from condition (iii) are non-empty partial functions from $\fn$ to $\tagval$,
which we call \emph{record} values.
We use ordinary function notation when using them (e.g., when writing
$v(f)$ to refer to the value associated with field $f$ in the record value
$v$).
\end{definition}

Although types are not used extensively in this paper, we leverage
the fact that typing distinguishes resource values from non-resource values.
We write $\typeof{v}{s}$ to indicate that value $v$ has a \emph{type} $s$.
If $\structname$ is the set of resource types,
$\typeof{v}{s}$, and $s\in \structname$, then we say
that $v$ is a {\em resource value} and $\tv{v}{t}$ is a {\em resource tagged value},
or simply {\em resource};
otherwise, $v$ is a {\em non-resource value}
and $\tv{v}{t}$ is a non-resource tagged value.
\footnote{
In a well-formed tagged value, the type of the value must be consistent with
the tag (see \Cref{def:unrestricted}).
}

\paragraph{Paths and Trees}
In the semantics, a path is a possibly empty list of field names, which we think of as representing a sequence of field selections.
%

A tagged value may be regarded as a labeled tree, in the usual way that expressions are parsed as trees,
with nodes  labeled by tags and edges labeled by field names.
Specifically, a primitive value is a tree consisting of a leaf.
The tree associated with a tagged record value consists of a node labeled with the tag
and a subtree for each record component.
If $r$ is a record value, then for each $(f,tv)\in r$, there is an edge from $r$ to the subtree for $tv$ labeled by $f$.

Two useful operations on values and paths are
(i) the subterm $tv[p]$  of $tv$ located at path $p$, and
(ii) the term $tv[p:=tv']$ obtained by replacing the subterm
at path $p$ with term $tv'$.
The subterm
identified by following the empty path is the term itself,
i.e. $tv[[]] = tv$.
These operations are formalized as follows.

\begin{definition}

If $tv=\tv{v}{t}\in\tagval$, then $tv[p]$ is defined inductively by:
\begin{enumerate}
\item $tv[[]] = tv$
\item $tv[f::p'] = v(f)[p']$ if $v$ is a record value and $f\in\dom{v}$
\item undefined otherwise
\end{enumerate}
\noindent
Similarly, if $tv=\tv{v}{t}$ and $tv'$ are both tagged values, then $tv[p:=tv']$ is defined inductively by:
\begin{enumerate}
\item $tv[[]:=tv'] = tv'$
\item $tv[f::p':=tv'] = \tv{\lset{v}{f}{v(f)[p':=tv']}}{t}$ if $v$ is a record value
and $f\in\dom{v}$
\item undefined otherwise
\end{enumerate}
\end{definition}

\paragraph{States}
In the \move semantics, a state comprises
persistent global storage, local memory, operand stack and local variables.

If $\cl$ is the set of memory \emph{locations}, then a \emph{reference}
is a triple $\rft{c}{p}{q}$ consisting of a location $c\in\cl$,
path $p$, and mutability qualifier $q \in \{\mut, \immut\}$.

Local states and global states include memories, which are mappings
from locations to values, and stacks.
Specifically,
a \emph{memory} $M$ is a  partial function from $\cl$ to $\tagval$.
Defining \emph{local values} to be locations or references,
a \emph{local memory} is similarly a partial function from $\varid$ to local values,
where $\varid$ is a set of local variables.
A {\em local stack} is a list of \emph{stack values}, which may be tagged values or references.

Global resources are identified by an address and a resource type.
If $\accadd$ is the set of addresses, then
the set $\globalresid= \accadd\times \structname$ of \emph{global resource id}s consists of
pairs $\tup{a,R}$, each associating a primitive value $a\in \accadd$ of type addresses
with a resource type $R\in\structname$.
A \emph{global store} $G$ is a  partial function from global resource ids $\globalresid$ to $\cl$.

A \emph{global state} is a tuple $\st{M,G,L,S}$, where $M$ is a memory,
$L$ is a local memory,  $S$ is a local stack, and $G$ is a global store.
A  \emph{local state} is similar with the global store omitted.

A \move program $P$ is a mapping from program locations $PC$ to operations and,
if  $\pc \in PC$ represents the current program counter, then $P[\pc]$ is the current instruction
and $P[\pc+1]$ is the next instruction under normal execution.
A \emph{program state} consists of a program counter $\pc \in PC$ and a global state.
\commentout{
\subsection{Definitions and Notation}

\paragraph{Data Values and Types}
Let $\prival$ be the set of primitive data values including
Booleans, integers, and addresses.
Because a value can additionally either be a resource or be a non-resource,
we further define a set $\tg$ of \emph{tags}.  A tag is either the distinguished
element $\unres$, indicating a value that is not a resource, or an element of
$\rn$, a set of resource tags.\footnote{
Note that the set $\tg$ is an internal mechanism for the semantics,
used here and in \Cref{sec:resource_safety}. The \move virtual machine does not need or implement tags.}
Additionally, let $\fn$ be a fixed, finite
set of \emph{field
  names}.
We now jointly define the set $\val$ of \emph{values}
and the set $\tagval$ of \emph{tagged values}.

\begin{definition}
\label{def:valtagval}
The sets $\val$ and $\tagval$ are the
smallest sets such that:
\begin{enumerate}[(i)]
\item $\prival \subseteq \val$;
\item for every $v\in\val$ and $t\in\tg$, $\tv{v}{t}\in\tagval$; and
\item if $f_1\dots f_n\in\fn$ are pair-wise distinct and $tv_1\dots tv_n\in \tagval$, then
  $\{(f_i,tv_i)|1\le i \le n\} \in \val$.
\end{enumerate}
\end{definition}

Non-primitive values are called \emph{record values} and are non-empty
partial functions from $\fn$ to $\tagval$.
If $f:A\to B$ is a (partial) function, then $\dom{f}$ is the set of all $a\in A$
for which $f(a)$ is defined, and $\img{f}$ is the set of all $b\in B$ for which
$f(a)=b$ for some $a\in A$.  We use $\lset{f}{a}{b}$ to denote the function
that is equivalent to $f$ on every input except $a$ and which maps $a$ to $b$.
Similarly, $\mdel{f}{a}$ is the partial function equivalent to $f$ except that it is
undefined at $a$.
This notation is useful both for records as well as for constructs in the local
and global state described below.

Each value $v$ has a \emph{type} $s$, denoted $\typeof{v}{s}$.  For the most
part, we ignore types in this paper (leaving a full treatment of the type
system of \move to future work); however, we do need to distinguish
between \emph{resource types} and non-resource types.  Let $\structname$ be the set of all
resource types.
If $\typeof{v}{s}$ and $s\in \structname$, then we say
that $v$ is a {\em resource value} and $\tv{v}{t}$ is a {\em resource tagged value},
or simply {\em resource};
otherwise, $v$ is a {\em non-resource value}
and $\tv{v}{t}$ is a non-resource tagged value.\footnote{
In a well-formed tagged value, the type of the value must be consistent with
the tag (see \Cref{def:unrestricted}).}

\paragraph{Lists, Trees, and Paths}
We write $[]$ for the empty list and
$e::l$ for the list
formed by inserting $e$ at the beginning of a list $l$.  Similarly, $l::e$
denotes a list which adds $e$ to the end of a list $l$.
This notation is also generalized for the concatenation
of two lists $l$ and $l'$, denoted $l::l'$.

A tagged value
induces a tree.  Each node in the tree is a value paired with a
tag.  If the value is a primitive value, then the node is a leaf.  If the value
is a record value $r$, then the node is an internal node and for each $(f,tv)\in r$, there is an edge
from $r$ to $tv$ labeled by $f$.

A \emph{path} is a list of elements from $\fn$.
If $tv=\tv{v}{t}\in\tagval$, then $tv[p]$ is defined as follows:
\begin{enumerate}
\item $tv[[]] = tv$
\item $tv[f::p'] = v(f)[p']$ if $v$ is a record value and $f\in\dom{v}$
\item undefined otherwise
\end{enumerate}

\noindent
Similarly, if $tv=\tv{v}{t}$ and $tv'$ are both tagged values, then $tv[p:=tv']$ is
defined as follows:
\begin{enumerate}
\item $tv[[]:=tv'] = tv'$
\item $tv[f::p':=tv'] = \tv{\lset{v}{f}{v(f)[p':=tv']}}{t}$ if $v$ is a record value
and $f\in\dom{v}$
\item undefined otherwise
\end{enumerate}

\paragraph{States}
In \move, the state is made up of the three parts previously shown in
\Cref{fig:interpreter_mechanics}: persistent global storage, the operand stack, and local variables.
References to data stored in local variables, or in
global storage (references to data on the stack are not allowed)
can be stored in local variables or
on the stack, but not in global storage. This ensures that the global storage is always a forest with roots at each account address. In the operational semantics, this is modeled with a \emph{memory} which maps
abstract locations to tagged values.  Tagged values in local variables or in
global storage are both represented using the memory object.
The stack stores tagged data values directly.  All of this is
made formal below.

We assume a fixed set of memory \emph{locations},
$\cl$.  A \emph{memory} $M$ is a
partial function from $\cl$ to $\tagval$.  A reference is a triple, denoted
$\rft{c}{p}{q}$, where $c\in\cl$, $p$ is a path, and $q$ is a mutability
qualifier (either $\mut$ or $\immut$).
Mutable references can update their referent; immutable references
are read-only.
A \emph{stack value} is either a tagged
value or a reference.
A \emph{local value} is either a location or a reference.  We assume a
fixed set $\varid$ of local program variables.

A \emph{local state} is a triple $\st{M,L,S}$, where $M$ is a \emph{memory},
$L$ is a partial function from $\varid$ to local values, and $S$ is a
list of stack values.
The global state includes the local state plus global storage. Global
storage consists of resources indexed by \emph{global resource ids}, each of which
is a pair consisting of an \emph{account address} and a \emph{resource type}.
Formally,
let $\accadd$ be the set of primitive values of type \emph{address},
and recall that $\structname$
is the set of all resource types.  The set $\globalresid$ of all global resource ids is
$\accadd \times \structname$.
Then, a \emph{global state} is $\st{M,G,L,S}$, with $M$, $L$, $S$ as before, and
where $G$ is a partial function from $\globalresid$ to $\cl$.
}

\begin{figure*}
\begin{centerframe}
\semrule[\movelocRule]
  {
               \lread{L}{x} = c\sep c\in\dom{M}
  }
  {\st{M,L,S}}
  {\ao{\movelocCmd}{x}}
  {\st{\mdel{M}{c},\mdel{L}{x},\stackht{\mread{M}{c}}{S}}}\\
\bigskip

\semrule[\movelocRefRule]
  {
               \lread{L}{x} = \rft{c}{p}{q}
  }
  {\st{M,L,S}}
  {\ao{\movelocCmd}{x}}
  {\st{M,\mdel{L}{x},\stackht{\lread{L}{x}}{S}}}\\
  \bigskip

\hspace*{-.7in}
\semrule[\hspace{-.5em}\copylocvalRule]
  {
    \lread{L}{x}=c\sep c\in\dom{M}\sep \mread{M}{c}=\tv{v}{\unres}
  }
  {\st{M,L,S}}
  {\ao{\copylocCmd}{x}}
  {\st{M,L,\stackht{\mread{M}{c}}{S}}}
\hspace*{.3in}
\semrule[\copylocrefRule]
  {
    r=\lread{L}{x} = \rft{c}{p}{q}
  }
  {\st{M,L,S}}
  {\ao{\copylocCmd}{x}}
  {\st{M,L,\stackht{r}{S}}}\\
  \medskip

\semrule[\cmdname{StLoc-TV}]
  {
    r\in\tagval\sep (x\not\in\dom{L} \vee \lread{L}{x} \in \references
    \vee     (\lread{L}{x} = c\wedge
    \mread{M}{c}=\tv{v}{\unres}))
    \sep c'\notin\dom{M}
  }
  {\st{M,L,\stackht{r}{S}}}
  {\ao{\storelocCmd}{x}}
  {\st{\mset{M}{c'}{r},\lset{L}{x}{c'},S}}\\
  \medskip

\semrule[\cmdname{StLoc-Ref}]
  {
    r\in\references\sep (x\not\in\dom{L} \vee \lread{L}{x}\in\references
    \vee     (\lread{L}{x} = c\wedge
    \mread{M}{c}=\tv{v}{\unres}))
  }
  {\st{M,L,\stackht{r}{S}}}
  {\ao{\storelocCmd}{x}}
  {\st{M,\lset{L}{x}{r},S}}\\
  \medskip



\semrule[\borrowlocRule]
  {
    \lread{L}{x}=c
  }
  {\st{M,L,S}}
  {\ao{\borrowlocCmd}{x}}
  {\st{M,L,\stackht{\rft{c}{[]}{\mut}}{S}}}\\
  \bigskip

\semrule[\borrowfieldRule]
  {
    r = \rft{c}{p}{q} \sep c\in\dom{M} \sep \mread{M}{c}[p]=\tv{\record{(f,tv_{f}),\cdots}}{t}
  }
  {\st{M,L,\stackht{r}{S}}}
  {\ao{\borrowfieldCmd}{f}}
  {\st{M,L,\stackht{\rft{c}{p::f}{q}}{S}}}\\
  \bigskip

\semrule[\freezerefRule]
  {
  }
  {\st{M,L,\stackht{\rft{c}{p}{q}
}{S}}}
  {\freezerefCmd}
  {\st{M,L,\stackht{\rft{c}{p}{\immut}}{S}}}\\
  \bigskip

\semrule[\readrefRule]
  {
    r = \rft{c}{p}{q}\sep
    c \in\dom{M}\sep
    \mread{M}{c}[p]=\tv{v}{\unres}
  }
  {\st{M,L,\stackht{r}{S}}}
  {\readrefCmd}
  {\st{M,L,\stackht{\tv{v}{\unres}}{S}}}\\
  \bigskip

\semrule[\writerefRule]
  {
    r = \rft{c}{p}{\mut}\sep
    tv=\mread{M}{c}\sep
    tv[p]=\tv{v}{\unres}\sep
    tv'=\tv{v'}{\unres} \sep
  }
  {\st{M,L,\stackht{tv'}{\stackht{r}{S}}}}
  {\writerefCmd}
  {\st{\mset{M}{c}{tv[p := tv']},L,S}}\\
%
  \bigskip

\begin{tabular}{ll}
\semrule[\popValRule]
  {
    tv=\tv{v}{\unres}
  }
  {\st{M,L,\stackht{tv}{S}}}
  {\popCmd}
  {\st{M,L,S}} &
\hspace*{.5in}
\semrule[\popRefRule]
  {
    r = \rft{c}{p}{q}
  }
  {\st{M,L,\stackht{r}{S}}}
  {\popCmd}
  {\st{M,L,S}}
\end{tabular}\\
\bigskip

\semrule[\packResourceRule]
  {
    s\in\structname\sep
    \typeof{\record{(f_{i},tv_{i})\mid 1 \leq i \leq n}}{s}
    \sep tv =
    \tv{\record{(f_{i},tv_{i})\mid 1 \leq i \leq n}}{t}
    \sep
    \mbox{$t\in\rn$ \text{ is fresh}}
  }
  {\st{M,L,\stackht{tv_{1}\cons\cdots\cons tv_{n}}{S}}}
  {\ao{\packCmd}{s}}
  {\st{M,L,
    \stackht{tv}{S}}
  }\\
  \bigskip

\semrule[\packUnrestrictedRule]
  {
    s\not\in\structname\sep
    \typeof{\record{(f_{i},tv_{i})\mid 1 \leq i \leq n}}{s}
    \sep
    tv =
    \tv{\record{(f_{i},tv_{i})\mid 1 \leq i \leq n}}{\unres}
    \sep
    tv_{i}=\tv{v_{i}}{\unres}
  }
  {\st{M,L,\stackht{tv_{1}\cons\cdots\cons tv_{n}}{S}}}
  {\ao{\packCmd}{s}}
  {\st{M,L,\stackht{tv}{S}}
  }\\
  \medskip

\semrule[\unpackRule]
  { tv=
    \tv{\record{(f_{i},tv_{i})\mid 1 \leq i \leq n}}{t}
  }
  {\st{M,L,\stackht{tv}{S}}}
  {\ao{\unpackCmd}{s}}
  {\st{M,L,
    \stackht{tv_{1}\cons\cdots\cons tv_{m}}{S}}
  }\\
  \medskip

\semrule[\loadconstRule]
  {
  }
  {\st{M,L,S}}
  {\ao{\loadconstCmd}{a}}
  {\st{{M},L,\stackht{\tv{a}{\unres}}{S}}}\\
  \medskip
%
%
\semrule[\stackopRule]
  {
    tv_{i}=\tv{v_{i}}{\unres}\sep\legal(\stackop,v_1,\dots,v_n)
  }
  {\st{M,L,\stackht{tv_{1}}{\cdots\stackht{tv_{n}}{S}}}}
  {\stackopCmd}
  {\st{M,L,\stackht{{\tv{\stackop(v_1,\dots,v_n)}{\unres}}}{S}}}\\

  \caption{\label{fig:rules-tree-1}Operational Semantics of \move: operations
    on local state}
\end{centerframe}
\end{figure*}

\subsection{Local State Rules}

Each rule in \Cref{fig:rules-tree-1} operates on local states (global storage
is unchanged and thus omitted to keep the presentation simple) and
takes the form
$$
\semrule[Rule1]
{\varphi}
{\st{M,L,S}}
{\ao{op}{\cdot}}
{\st{M',L',S'}}
$$
where Rule1 is the name of the rule,
$\varphi$ is a precondition for applying it,
$\st{M,L,S}$ and
$\st{M',L',S'}$
are local states,
and \ao{op}{\cdot} is an operation parameterized by a field, variable, or record declaration of the current module.
When there are no parameters, we simply write $\op$.
We use the following variable conventions: $c\in\cl$;
$x\in\varid$; $t\in\tg$; $v\in\val$; $tv\in\tagval$; $f\in\fn$; $p$ is a path;
$q$ is a mutability qualifier; $r$ is a stack value; and $s$ is a type.
%

The $\movelocRule$ rules show how the state changes when a local value is moved
from a local variable $x$ onto the stack.  Note that if the value moved is not
a reference, it is removed from memory when it is placed on the stack.  The
$\copylocvalRule$ rules copy local values to the stack.  In this case, the local
variable $x$ (and memory if applicable) retain their values.  Note that
these rules can only be applied if the local value is not a resource.
The $\storelocbotrefRule$ rules take the top stack value and store it
in the local variable $x$.  There are two versions of the rule, depending on
the current local value of $x$.  If $x$ has no value or contains a reference, it
is always possible to store the stack value in $x$
(note that we can always choose a $c'$
not currently in the domain of $M$).  However, if $x$ contains a tagged value,
then the rule can
only be applied if the tagged value is not a resource.

$\borrowlocRule$ pushes a reference to the local value in $x$ onto the stack.
$\borrowfieldRule$ takes a reference $r$ from the top of the stack and pushes a new
reference onto the stack that points to the tagged value in field $f$ of the
record pointed to by $r$.
$\freezerefRule$ turns a mutable reference into an immutable
reference.  $\readrefRule$ makes a copy of the tagged value pointed to by a reference on
top of the stack and pushes it onto the stack (note that the value must be
a non-resource). \readrefCmd can be applied to either a mutable reference or an immutable reference.
$\writerefRule$ takes a non-resource tagged value $tv'$
and a
reference $r$ from the stack, and replaces the tagged value $tv$
pointed to by $r$ (which must also be a non-resource and of the same type) by
$tv'$. It can only be applied when $r$ is a mutable reference.
The distinction between mutable and immutable references is not particularly
useful in the semantics. However, we include these qualifiers because they
are crucially important for the correct operation of the \move bytecode
verifier and because we want the semantics to accurately reflect the
real implementaion.

The $\popValRule$ rules pop a stack value off the top of the stack (so long as it is not
a resource).
The $\packCmd$ rules (\packResourceRule and \packUnrestrictedRule) create a record of a given type $s$.
The $\unpackRule$ rule decomposes a record into its fields.
 For resources, \packResourceRule pairs the record with a fresh resource tag,
i.e., in each application of the rule a new unique tag is created.
 The \unpackRule rule discards the tag associated with the unpacked record,
but freshness guarantees the discarded tag will not be reused.
  $\loadconstRule$
places a constant primitive value $a$ onto the stack.  $\stackopRule$ is a
meta-rule: there is an instance of the rule for every operation on primitive
data values (e.g. negation and conjunction on Booleans, addition and
subtraction on integers, etc.).  The instantiated rules are formed by replacing
$\stackopCmd$ by the specific operation and replacing $\legal$ by a condition
that specifies legal operands for each $\stackop$ (e.g. that the divisor is
non-zero for a division operation).

\subsection{Global State Rules}
The rules of \Cref{fig:opsemAcc} are similar except that they operate on global
states.
$\movetoCmd$ takes an address and a resource from
the stack and
puts the resource in the global storage at the location indexed by the address and
the resource type.
Conversely, $\movefromCmd$ removes a resource
from global storage and puts it on the stack.  $\borrowglobalCmd$ gets a
reference to a resource in global storage.  And finally,
$\existsCmd$ checks
whether the global storage currently contains a resource for a particular
global resource id.
In this rule, $b$ is set to true
if $\tup{a,s}$ is in the domain of $G$,
and is false otherwise.

\begin{figure*}
\begin{centerframe}
\begin{small}





\semrule[\movetoRule]
  {
    tv_{1} = \tv{a}{\unres}\sep
    a\in\accadd\sep
    tv_2 = \tv{v}{t}\sep
    \typeof{v}{s}\sep
    s\in\structname\sep
    \tup{a,s} \notin \dom{G}\sep
    c\notin\dom{M}
  }
  {\st{M,G,L,\stackht{\stackht{tv_{1}}{tv_{2}}}{S}}}
  {\ao{\movetoCmd}{s}}
  {\st{\mset{M}{c}{tv_2},\gset{G}{\tup{a,s}}{c},L,S}}
  \bigskip


\semrule[\movefromRule]
  {
    s\in\structname\sep
    tv = \tv{a}{\unres}\sep
    a\in\accadd\sep
    \gread{G}{\tup{a,s}} = c\sep
    \mread{M}{c}=tv'
  }
  {\st{M,G,L,\stackht{tv}{S}}}
  {\ao{\movefromCmd}{s}}
  {\st{\mdel{M}{c},\gdel{G}{\tup{a,s}},L,\stackht{tv'}{S}}}\\
  \bigskip


\semrule[\borrowglobalRule]
  {
    s\in\structname\sep
    tv = \tv{a}{\unres}\sep
    a\in\accadd\sep
    \gread{G}{\tup{a,s}} = c
  }
  {\st{M,G,L,\stackht{tv}{S}}}
  {\ao{\borrowglobalCmd}{s}}
  {\st{M,G,L,\stackht{\rft{c}{[]}{\mut}}{S}}}\\
  \bigskip


\semrule[\existsRule]
  {
    tv = \tv{a}{\unres}\sep
    a\in\accadd\sep
    b \Leftrightarrow \tup{a,s} \in \dom{G}
  }
  {\st{M,G,L,\stackht{tv}{S}}}
  {\ao{\existsCmd}{s}}
  {\st{M,G,L,\stackht{\tv{b}{\unres}}{S}}}\\

\caption{\label{fig:opsemAcc}Operational Semantics of \move: Global state operations}
\end{small}
\end{centerframe}
\end{figure*}

\subsection{Program State Rules}
The rules of \Cref{fig:pcrules} lift the small-step semantics presented so far
to semantics of call-free \move programs that can model loops and conditional branching using unstructured control-flow.
They assume an abstract set $PC$ of program locations
over which a program counter $\pc$ ranges.
A program $P$ is a mapping from such program locations to operations.  The combined global, local, and memory state (represented in the rule $\steppcRule$ as $\sigma$) is extended with the $\pc$
to obtain a \emph{program state}, and the rules simply implement sequential and
branching control flow in a straightforward way.

\begin{figure*}
\begin{centerframe}
\pcrule[\steppcRule]
  {
    \progloc{P}{\pc}=\op,
    \step{\sigma}{\op}{\sigma'}
  }
	{P}{\pcstate{\pc}{\sigma}}{\pcstate{\pc+1}{\sigma'}}
%
%
%
\\
\pcrule[\branchtruepcRule]
  {
	  \progloc{P}{\pc}=\ao{\branchCmd}{\ell},
    \mread{M}{c}=\tv{\true}{\unres}
  }
	{P}{\pcstate{\pc}{\st{M,L,\stackht{c}{S}}}}{\pcstate{\ell}{\st{M,L,S}}}\\
\pcrule[\branchfalsepcRule]
  {
	  \progloc{P}{\pc}=\ao{\branchCmd}{\ell},
    \mread{M}{c}=\tv{\false}{\unres}
  }
	{P}{\pcstate{\pc}{\st{M,L,\stackht{c}{S}}}}{\pcstate{\pc+1}{\st{M,L,S}}}


\caption{\label{fig:pcrules}Program counter rules}
\end{centerframe}
\end{figure*}

These evaluation rules intentionally get stuck in the presence of resource, type, or memory errors (e.g., $\copylocCmd$ on a variable that contains a resource).
As we mentioned in \Cref{sec:bytecode}, the \move bytecode verifier performs checks that preclude these errors.
However, there are two kinds of runtime errors not caught by the bytecode verifier that we also model as stuck execution for convenience:
\begin{enumerate}
\item Errors in $\stackopCmd$ such as division by zero and arithmetic
  over/underflow (which \move chooses to treat as errors);
\item Overwriting an existing global resource id in $\movetoCmd$ or accessing a global resource id that  does not exist in $\movefromCmd$ or $\borrowglobalCmd$.
\end{enumerate}
In practice, these runtime errors trigger an abort that terminates the current transaction and reverts any changes to global state.

\section{Resource Safety}
\label{sec:resource_safety}
In this section, we prove that the operational semantics introduced above enforces a conservation property: a resource cannot be created or destroyed except by the privileged \packCmd and \unpackCmd constructs available in its declaring module.
We define a set of {\em well-formed} states (\Cref{def:wellformedstate}),
show that the semantic rules preserve well-formedness (\Cref{prop:well-formed-preserved}),
and finally, that well-formedness guarantees resource safety (\Cref{thm:resourcesafe}).
We start by introducing the parts of a well-formed state.

\begin{definition}[Well-formed tagged value]
\label{def:unrestricted}
A tagged value $tv=\tv{v}{t}$ with $\typeof{v}{s}$ is {\em well-formed} if
$s\in\structname$ iff
$t\in \rn$, and in addition, one
of the following holds:
(i)~$v$ is primitive;
(ii)~$v=\record{(f_{i}, tv_{i}) \mid 1\leq i\leq n}$ such that every $tv_{i}$
is well-formed,
and if $s\notin\structname$ then
 for every $i$,
$tv_{i}$ is not a resource.
\end{definition}

\noindent
Intuitively, in a well-formed tagged value, a
resource value is never nested inside of an non-resource value,
and the tag corresponds to the type.

\begin{definition}[Globally consistent state]
We say that a state
$\st{M,G,L,S}$ is {\em globally consistent} if the following holds:
\begin{inparaenum}[(i)]
\item\label{it:values} Every tagged value in $\img{M}$ or in $S$ is well-formed;
\item\label{it:memory} $\dom{M}=\img{G}\cup (\img{L}\cap\cl)$;
\item\label{it:globaltypes} For every $\tup{a,s} \in \dom{G}$,
$\mread{M}{\gread{G}{\tup{a,s}}}=\tv{v}{t}$ with
 $\typeof{v}{s}$.
\end{inparaenum}
\end{definition}

\noindent
Intuitively,
(\ref{it:values}) means that tagged values in the state are well-formed;
(\ref{it:memory}) means that global resource ids and local variables only point to locations in the memory
(no dangling references)
and
the memory only contains locations pointed to by some global resource id or local variable
(no garbage);
(\ref{it:globaltypes}) means that global values have their expected types.

\begin{definition}[Tag-consistent state]
A state
$\st{M,G,L,S}$ is {\em tag-consistent} if the following holds:
\begin{inparaenum}[(i)]
\item\label{it:tag1}
if
$\mread{M}{c_1}[p_{1}]=\tv{v_{1}}{t}$,
$\mread{M}{c_2}[p_{2}]=\tv{v_{2}}{t}$, and
$t \neq \unres$,
then
$c_{1}=c_{2}$ and $p_{1}=p_{2}$;
\item\label{it:tag2}
If
$S=[\ldots,s_{i_{1}},\ldots,s_{i_{2}},\ldots]$,
$s_{i_{1}}[p_{1}]=\tv{v_{1}}{t}$,
$s_{i_{2}}[p_{2}]=\tv{v_{2}}{t}$, and
$t \neq \unres$,
then
$i_{1}=i_{2}$ and $p_{1}=p_{2}$;
and
\item\label{it:tag3}
It is never the case that
$S=[\ldots,s_{i},\ldots]$,
$M[c][p_{c}]=\tv{v_{1}}{t}$,
$s_{i}[p_{i}]=\tv{v_{2}}{t}$, and
$t \neq \unres$.
\end{inparaenum}
\end{definition}

\noindent
Intuitively, being tag-consistent means that resource tags are unique, i.e.
a resource tag can appear in the memory and the stack at most once.

\begin{definition}[Non-aliasing]
A state
$\st{M,G,L,S}$ is {\em non-aliasing} if the following holds:
\begin{inparaenum}[(i)]
\item\label{it:tree1} If $x_{1},x_{2}\in\dom{L}$ with $x_1\neq x_2$ and $\lread{L}{x_{1}},\lread{L}{x_{2}}\in\cl$,
then $\lread{L}{x_{1}}\neq\lread{L}{x_{2}}$;
\item\label{it:tree2} If $g_{1}, g_{2}\in\globalresid$ with $g_1 \neq g_2$,
then $\lread{G}{g_{1}}\neq\lread{G}{g_{2}}$; and
\item\label{it:tree3} If $g\in\globalresid$ and $x\in\dom{L}$ then
$\lread{G}{g}\neq\lread{L}{x}$.
\end{inparaenum}
\end{definition}

\noindent
Intuitively, a state is non-aliasing if different
global or local identifiers
cannot point to the same
memory location.

\begin{definition}[Well-formed state]
\label{def:wellformedstate}
A state
$\st{M,G,L,S}$ is {\em well-formed} if
it is globally consistent, tag-consistent, and non-aliasing.
\end{definition} %

\noindent
Well-formed states ensure that
global resource identifiers and local variables only point to locations that are in the memory,
and do not alias.
Note, however, that according to these semantics, a well-formed state may still
contain dangling {\em references}
i.e.,
$\rft{c}{p}{q}\in\img{L}\cup S$
 s.t. $c \not\in\dom{M}$, as well as aliasing between references.
As explained in \Cref{sec:bytecode}, the bytecode veirifer ensures stronger guarantees (e.g., no dangling references), but in this section we do not depend on these stronger invariants.

We now show that the operational semantics preserves well-formedness of states.

\begin{definition}[Well-formed execution sequence]
\begin{sloppypar}
Let $P$ be a program.
An {\em execution sequence} of $P$ is
$\pi = \pcstate{pc_0}{\sigma_0},\ldots,\pcstate{pc_n}{\sigma_n}$
such that $pc_i\in \dom{P}$ for every $0 \leq i \leq n$
and $\pcstep{P}{\pcstate{\pc_{i}}{\sigma_i}}{\pcstate{\pc_{i+1}}{\sigma_{i+1}}}$
for every $0\leq i < n$.
An execution sequence is called {\em well-formed} if each $\sigma_i$ is well-formed.
\end{sloppypar}
\end{definition}

\begin{proposition}
\label{prop:well-formed-preserved}
\begin{sloppypar}
Let $P$ be a program and $\pi = \pcstate{pc_0}{\sigma_0},\ldots,\pcstate{pc_n}{\sigma_n}$
an execution sequence of $P$.
If $\sigma_0$ is well-formed, then $\pi$ is well-formed, i.e., $\sigma_1,\ldots,\sigma_n$ are all well-formed.
\end{sloppypar}
\end{proposition}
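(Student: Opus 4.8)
The plan is to prove the statement by induction on the length $n$ of the execution sequence $\pi$. The base case $n=0$ is immediate, since $\sigma_0$ is well-formed by hypothesis. For the inductive step it suffices to establish a single-step preservation lemma: whenever $\sigma$ is well-formed and $\pcstep{P}{\pcstate{\pc}{\sigma}}{\pcstate{\pc'}{\sigma'}}$, the successor $\sigma'$ is also well-formed. The program-counter rules of \Cref{fig:pcrules} reduce this obligation to the small-step relation: \steppcRule simply applies an underlying operation via $\step{\sigma}{\op}{\sigma'}$, so preservation follows from the corresponding rule in \Cref{fig:rules-tree-1} or \Cref{fig:opsemAcc}; and \branchtruepcRule and \branchfalsepcRule only pop a boolean (a primitive, hence non-resource) value while leaving $M$, $L$, and $G$ untouched, so all three clauses of \Cref{def:wellformedstate} are trivially maintained. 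Thus the heart of the argument is a case analysis over the operational rules, re-establishing global consistency, tag-consistency, and non-aliasing for each.

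Many rules fall into an easy category. Those that neither add nor remove entries of $\dom{M}$, $\img{G}$, or the local map, and that introduce no new resource tag---\freezerefRule, \borrowlocRule, \borrowfieldRule, \borrowglobalRule, \existsRule, \loadconstRule, \stackopRule, \popValRule, \popRefRule, \copylocrefRule, \movelocRefRule---change only reference values or push primitives, so preservation of each clause is direct. Two side conditions do real work here: \readrefRule and \copylocvalRule apply only to values tagged $\unres$, and by clause (ii) of \Cref{def:unrestricted} such values contain no nested resource, so copying them onto the stack creates no duplicate resource tag and tag-consistency survives.

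The delicate cases are those that allocate, free, or restructure tagged values, and here the invariants genuinely interact. \storelocbotrefRule and \movetoRule allocate a fresh location $c'\notin\dom{M}$: freshness is exactly what re-establishes clause (ii) of global consistency and the non-aliasing clauses, since a brand-new location cannot already be the image of another local variable or global resource id. Dually, \movelocRule and \movefromRule remove a location from $M$ together with its sole pointer in $L$ or $G$, and transfer (rather than copy) its contents to the stack; here I would invoke non-aliasing of the incoming state to argue that no other identifier pointed to the deleted location, keeping $\dom{M}=\img{G}\cup(\img{L}\cap\cl)$ in balance, while the move discipline preserves tag-consistency. For \writerefRule the update occurs at a path inside an existing value and replaces a non-resource subterm by another non-resource of the same type, so by the definition of $tv[p:=tv']$ well-formedness and all tag bookkeeping are preserved. \unpackRule discards the record's tag and spreads its components across distinct stack slots; the components already carried pairwise-distinct tags within one value, so tag-consistency over the stack is maintained.

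I expect the main obstacle to be tag-consistency for \packResourceRule, and more broadly keeping the global tag-uniqueness invariant coherent across the memory and the entire stack at once. \packResourceRule stamps the newly constructed record with a tag $t\in\rn$ that is \emph{fresh}; the crux is to argue that this freshness is genuinely global---that $t$ occurs nowhere in $M$ or $S$---so that the new resource cannot collide with any existing resource tag, thereby restoring all three clauses of tag-consistency. Formalizing ``fresh'' so that it quantifies over the whole state, together with an auxiliary observation that every other rule either preserves the set of live resource tags or only removes from it (so that a discarded tag is never revived), is the part that requires the most care. Once this is in place, the remaining global-consistency and non-aliasing obligations for the restructuring rules follow by routine checking.
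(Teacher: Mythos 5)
Your proposal takes essentially the same route as the paper: induction on $n$ reduced to a per-rule check that each operational rule preserves global consistency, tag-consistency, and non-aliasing, with the same key observation for \movelocRule (non-aliasing of the incoming state is what lets you delete $c$ from $\dom{M}$ and from $\img{L}\cap\cl$ in lockstep). The paper writes out only the \movelocRule case and declares the rest ``verified similarly,'' so your systematic grouping of the remaining rules---and your flagging of the global freshness of the \packResourceRule tag as the step needing the most care---is a faithful, and in fact more detailed, elaboration of the same argument.
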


\begin{proof}
The proof is by induction on $n$, and
amounts to a routine check that
the rules of \Cref{fig:rules-tree-1} and \Cref{fig:opsemAcc},
as well as
the \branchfalsepcRule and \branchtruepcRule rules of \Cref{fig:pcrules} preserve well-formedness.
We explicitly prove this for \movelocRule.
The rest are verified similarly.
{\bf Global Consistency:}
(\ref{it:values}) follows from the induction hypothesis
since the set of tagged values in the memory and the stack are not changed.
(\ref{it:memory})
is also preserved:
by the induction hypothesis, $\tup{M,G,L,S}$ is non-aliasing.
Thus, the fact that $x$ is removed from $L$ also means that $c$
is removed from $\img{G}\cup (\img{L}\cap\cl)$.  Since $c$ is also removed
from $M$, it follows that
(\ref{it:memory}) holds.
(\ref{it:globaltypes}) is preserved since
$G$ is unchanged and no locations are added to $M$.
{\bf Tag Consistency:}
(\ref{it:tag1}) is preserved
as the memory only gets smaller after \movelocCmd.
(\ref{it:tag2}) and (\ref{it:tag3}) hold initially by the induction hypotehsis;
it is easy to see that both must also hold after moving a value from memory to
the stack.
{\bf Non-Aliasing:}
(\ref{it:tree2}) holds since the global state is unchanged.
Additionally, (\ref{it:tree1}) and (\ref{it:tree3}) are preserved
as $L$ only gets smaller after \movelocCmd.
\end{proof}

Next, we define the {\em resources} of a state,
and what it means for resources to be {\em introduced} or {\em eliminated} in an execution
sequence.  We can then prove the resource safety theorem.

\begin{definition}[State Resources]
Let $\sigma=\st{M,G,L,S}$ be a state.
The {\em resources of $\sigma$}, denoted
$\resourcesof{\sigma}$,
are defined as follows:
$\resourcesof{\st{M,G,L,S}}=
\set{t\in\rn \mid \tv{v}{t}\in\img{M}\cup S
}
$
\end{definition}

\noindent
Intuitively, resources of a state are the resource tags
that occur in a tagged value of the state.

\sloppy
\begin{definition}[Resources Introduced and Eliminated]
  \label{def:rie}
Let $P$ be a program and $\pi = \pcstate{pc_0}{\sigma_0},\ldots,\pcstate{pc_n}{\sigma_n}$
an execution sequence of $P$.
The set of {\em resources introduced} in $\pi$, denoted $\resourcesi{\pi}$, is:
$\{t \in \rn \mid \exists\, 0\leq i < n.\: \progloc{P}{\pc_{i}}=\packCmd$ and $\sigma_{i+1}=\st{M,G,L,\stackht{\tv{v}{t}}{S}}  \}$.
The set of {\em resources eliminated} in $\pi$, denoted $\resourcese{\pi}$, is:
$\{t \in \rn \mid \exists\, 0\leq i < n.\: \progloc{P}{\pc_{i}}=\unpackCmd$ and $\sigma_{i}=\st{M,G,L,\stackht{\tv{v}{t}}{S}}  \}$.
\end{definition}

\noindent
Intuitively, $\resourcesi{\pi}$ collects all resource tags that were
created (using $\packCmd$) during the execution; similarly,
$\resourcese{\pi}$ collects all resource tags that were consumed
(using $\unpackCmd$) during the execution. Notice that these sets are
not necessarily disjoint. That is, a resource that is created and later
consumed during $\pi$ will appear both in $\resourcesi{\pi}$ and in
$\resourcese{\pi}$.

\begin{theorem}[Resource Safety]
\label{thm:resourcesafe}
Let $P$ be a program and $\pi = \pcstate{pc_0}{\sigma_0},\ldots,\pcstate{pc_n}{\sigma_n}$
a well-formed execution sequence of $P$.
Then, $\resourcesof{\sigma_{n}} =
\resourcesof{\sigma_{0}}\cup \resourcesi{\pi} \setminus \resourcese{\pi}$.
\end{theorem}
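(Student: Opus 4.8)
The plan is to prove the identity by induction on the length $n$ of the execution sequence. The base case $n=0$ is immediate: $\resourcesi{\pi}$ and $\resourcese{\pi}$ are empty and $\sigma_n=\sigma_0$, so both sides collapse to $\resourcesof{\sigma_0}$. For the inductive step I would let $\pi'$ be the length-$(n-1)$ prefix of $\pi$, apply the induction hypothesis to obtain $\resourcesof{\sigma_{n-1}}=(\resourcesof{\sigma_0}\cup\resourcesi{\pi'})\setminus\resourcese{\pi'}$, and then analyze the final transition $\pcstep{P}{\pcstate{\pc_{n-1}}{\sigma_{n-1}}}{\pcstate{\pc_n}{\sigma_n}}$ by cases on the rule applied. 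Throughout, I read $\resourcesof{\sigma}$ as the set of resource tags occurring at \emph{any} path inside a top-level value of $M$ or $S$, consistent with the way the tag-consistency clauses quantify over paths; this reading is what makes the statement nontrivial, since under a top-level-only reading $\unpackRule$ would spuriously ``promote'' nested tags and the theorem would fail.

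The heart of the argument is a per-rule lemma sorting each transition into one of three categories. In category (a), the resource-creating rule $\packResourceRule$ adjoins a single fresh tag $t$, so that $\resourcesof{\sigma_n}=\resourcesof{\sigma_{n-1}}\cup\{t\}$, $\resourcesi{\pi}=\resourcesi{\pi'}\cup\{t\}$, and $\resourcese{\pi}=\resourcese{\pi'}$. In category (b), the resource-eliminating instance of $\unpackRule$ (when the top-of-stack record carries a tag $t\in\rn$) removes exactly $t$, giving $\resourcesof{\sigma_n}=\resourcesof{\sigma_{n-1}}\setminus\{t\}$, $\resourcese{\pi}=\resourcese{\pi'}\cup\{t\}$, and $\resourcesi{\pi}=\resourcesi{\pi'}$. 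In category (c), every other rule---including $\packUnrestrictedRule$, the non-resource instance of $\unpackRule$, and all local- and global-state rules---leaves the set of resource tags in $\img{M}\cup S$ unchanged and contributes nothing to $\resourcesi$ or $\resourcese$, so $\resourcesof{\sigma_n}=\resourcesof{\sigma_{n-1}}$. Category (c) then combines directly with the induction hypothesis, while (a) and (b) finish by elementary set algebra.

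Establishing category (c) is mostly routine but hinges on the side conditions baked into the rules: $\copylocCmd$ and $\readrefCmd$ fire only on non-resources, $\popCmd$ discards only a non-resource, $\storelocCmd$ and $\writerefCmd$ overwrite only non-resource cells, and $\movetoCmd$ requires the target slot to be empty. These guards are exactly what prevent a ``neutral'' instruction from silently duplicating or destroying a resource tag, and I would invoke each when checking that $\img{M}\cup S$ retains its resource tags: $\movelocCmd$, $\movetoCmd$, and $\movefromCmd$ merely relocate a tagged value between memory and stack; $\borrowlocCmd$, $\borrowfieldCmd$, $\freezerefCmd$, and $\borrowglobalCmd$ touch only references; $\loadconstCmd$, $\stackopCmd$, and $\existsCmd$ introduce only non-resources; and the branch rules consume only a Boolean.

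The main obstacle is category (b), and it is where well-formedness is essential. I need $\resourcesof{\sigma_n}=\resourcesof{\sigma_{n-1}}\setminus\{t\}$ to hold as an honest set difference, which requires that the eliminated tag $t$ occur nowhere else in $\sigma_{n-1}$; this is precisely the tag-consistency component of well-formedness, available because $\pi$ is a well-formed execution sequence. I must also confirm that the children exposed by $\unpackRule$ carry tags distinct from $t$ (again by tag uniqueness) and that all their nested tags survive the transition, so that only $t$ is lost. Symmetrically, category (a) relies on reading freshness globally: a freshly packed $t$ has never appeared in any earlier state, hence $t\notin\resourcesof{\sigma_0}\cup\resourcesi{\pi'}\cup\resourcese{\pi'}$, and in particular $t\notin\resourcese{\pi'}$, which is exactly what is needed to justify $((\resourcesof{\sigma_0}\cup\resourcesi{\pi'})\setminus\resourcese{\pi'})\cup\{t\}=(\resourcesof{\sigma_0}\cup\resourcesi{\pi'}\cup\{t\})\setminus\resourcese{\pi'}$. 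Lining up these two set-algebra identities with the uniqueness and freshness facts is the only genuinely delicate part of the proof.
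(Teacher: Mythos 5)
Your proof is correct and follows essentially the same route as the paper's: induction on $n$ with a case split on the last rule applied, using freshness of the new tag for \packCmd and tag uniqueness (from well-formedness) for \unpackCmd, plus the observation that all other rules preserve the tag set. You are in fact more explicit than the paper on two points it leaves implicit --- that $\resourcesof{\sigma}$ must be read as tags occurring at any path within a value of $\img{M}\cup S$ (otherwise \packCmd/\unpackCmd would spuriously hide or expose nested tags), and that tag-consistency is exactly what makes the set difference in the \unpackCmd case honest --- so no gaps.
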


\begin{proof}
  The proof is by induction on $n$.  The base case ($n=0$) is
straightforward (in this case, $\resourcesi{\pi}  = \resourcese{\pi} =
\emptyset$).  For the induction step, the induction hypothesis
provides:
$$(*)~\resourcesof{\sigma_{n-1}} =
\resourcesof{\sigma_{0}}\cup \resourcesi{ \pi' } \setminus
\resourcese{ \pi'}$$
where
$\pi' =
\pcstate{pc_0}{\sigma_0},\ldots,\pcstate{pc_{n-1}}{\sigma_{n-1}}$.
If
$\progloc{P}{\pc_{n-1}}\notin\set{\ao{\packCmd}{s},\unpackCmd}$, then
examination of the rules shows that $\resourcesof{\sigma_{n}} =
\resourcesof{\sigma_{n-1}}$ (i.e. for all rules other than the $\packCmd$ and
$\unpackCmd$ rules, the set of resource tags
in the global state remains the same after the application of the rule).
 By \Cref{def:rie},
$\resourcesi{\pi} = \resourcesi{\pi'}$ and $\resourcese{\pi} =
\resourcese{\pi'}$.
Using $(*)$,
we get
$\resourcesof{\sigma_{n}} =
  \resourcesof{\sigma_{0}}\cup \resourcesi{\pi} \setminus \resourcese{\pi}$.
  The proof is similar if $\progloc{P}{\pc_{n-1}}=\ao{\packCmd}{s}$ and $s\notin\structname$
  or if $\progloc{P}{\pc_{n-1}}=\unpackCmd$ and $\sigma_{n-1}=\st{M,G,L,\stackht{\tv{v}{\unres}}{S}}$.
  If $\progloc{P}{\pc_{n-1}}=\ao{\packCmd}{s}$ for a resource type $s$, then
  by $\packResourceRule$, $\resourcesof{\sigma_{n}} = \resourcesof{\sigma_{n-1}} \cup \set{t}$
  where $\sigma_n = \st{M,G,L,\stackht{\tv{v}{t}}{S}}$. We know $t$ is
  fresh, so it is not in $\resourcese{\pi'}$. By \Cref{def:rie}, $\resourcesi{\pi}\! =\! \resourcesi{\pi'} \cup \set{t}$ and $\resourcese{\pi} = \resourcese{\pi'}$. Thus, by $(*)$, $\resourcesof{\sigma_{n}} =
  \resourcesof{\sigma_{0}}\cup \resourcesi{\pi} \setminus \resourcese{\pi}$.
The proof is similar if
$\progloc{P\!}{\pc_{n-1}}\!=\!\unpackCmd$
  and $\sigma_{n-1}\!=\!\st{M,G,L,\stackht{\tv{v}{t}}{S}}$, $t\!\neq\!\unres$.
  %
\end{proof}

\section{Experience With \move}
\label{sec:experience}

In this section, we describe the open-source implementation of the \move language, report on our experience using \move in the \libra blockchain, and  mention efforts that have adopted or built on the language.

\subsection{Implementation}

\paragraph{Move Compiler}
We have implemented\footnote{\url{https://github.com/libra/libra/tree/master/language/move-lang}} a compiler from the \move source code used in \Cref{fig:bank} and \Cref{fig:bad_programs} to the \move bytecode language. The source language adds structured control flow for convenience and expressions to abstract away the operand stack, but the programming model otherwise matches the bytecode language. Although all of the examples in this paper use explicit \code{copy} and \code{move} directives when accessing variables (e.g., \code{let x = copy r}), the compiler does not require these directives. In the absence of a directive, the compiler uses liveness analysis to emit a \code{move} for the last usage of a variable and a \code{copy} for all other uses. In addition, the compiler implements source code equivalents of the bytecode verifier analyses with friendly error messages.

\paragraph{\move Virtual Machine}
The \move virtual machine implements a superset of the bytecode interpreter semantics described in \Cref{sec:bytecode_formal}. The implemented interpreter includes gas metering similar to the EVM~\cite{ethereum}, support for a limited form of generics, and a \code{vector} type.
The virtual machine also includes the bytecode verifier, which performs static checks for type safety, usage of uninitialized variables, reference safety, and stack balancing (to ensure that the callee cannot illegally access stack locations belonging to a caller).
The bytecode verifier has a linker for ensuring that the usage of external types in a module are consistent with their declarations (e.g., procedure \code{p} invoked in module \code{M1} exists and matches the type signature of its declaring module \code{M2}).
The implementation\footnote{\url{https://github.com/libra/libra/tree/master/language}} of both components consists of about 17K lines of Rust code.

Although the \move language was originally created to serve as the execution layer for the \libra blockchain, we have maintained a clean separation between the platform-agnostic \move language layer implemented in the virtual machine and the \libra-specific layer implemented in the \libra adapter component and \libra's \move standard library.
This flexible architecture has facilitated adoption of \move outside of \libra (see \Cref{sec:move_usage}).

\paragraph{Tooling}
In addition to the compiler and virtual machine, we have implemented several tools\footnote{\url{https://github.com/libra/libra/tree/master/language/tools}} to facilitate testing and analysis of \move code:
\begin{itemize}
\item A testing framework\footnote{\url{https://libra.org/en-US/blog/how-to-use-the-end-to-end-tests-framework-in-move/}} that allows users to write multi-transaction test scenarios.
\item A bytecode code coverage tool that attaches to the testing framework and records the bytecode instructions exercised by each test.
\item A \move bytecode disassembler similar to the \texttt{javap} utility for Java bytecode. The disassembler prints raw bytecode, but can also accept an optional source to bytecode map that augments the result with variable names and line numbers.
\end{itemize}

\subsection{Integration With the \libra Blockchain}
\label{sec:core_modules}
The \move VM implements the transaction execution layer in the \libra blockchain~\cite{libra_blockchain_white}.
At a high level, a blockchain is a simple replicated state machine~\cite{using_time}.
\libra \emph{validators} (replicas) collectively maintain a distributed database that encodes the global state structure described in \Cref{sec:move_overview}.
Users submit transactions to the system that are batched into a \emph{block}, or ordered list of transactions.
The role of the transaction execution layer is to take a block of transactions and the current global state as input and execute each transaction to produce a \emph{write set} representing the effects of the transaction on the global state.
The effect of the block is the ordered composition of the effects of each of its transactions.

The logic for \libra execution lives in two separate places: the \libra adapter, and \libra's \move standard library. The \libra adapter contains about 1K lines of Rust code that wrap the \move virtual machine. The adapter implements logic for splitting a \libra block into transactions, checking a cryptographic signature on the transaction, extracting a \move transaction script,  arguments, and gas budget to pass to the \move virtual machine, and applying the effects of executing the transaction to the storage layer.

\paragraph{\move Standard Library}
\libra's \move standard library consists of 40 modules totalling about 3K lines of \move source code that compile to 44KB of bytecode. Broadly speaking, these modules implement four categories of functionality:
\begin{enumerate}
\item Coins: implementations of both single-currency stablecoins and the multi-currency \code{LBR} coin as described in~\cite{libra_whitepaper}
\item Accounts: several different account types, sequence number logic to prevent replay attacks, sender authentication, key rotation, events for notifying clients
\item Validator management: adding/removing validators, paying gas fees to validators, rotating validator cryptographic keys
\item Utility modules such as \code{Option}, \code{Compare}, and \code{FixedPoint32}
\end{enumerate}

We will present a subset of the \code{Coin} and \code{Account} APIs to give the reader a sense of how \move's resources give us the flexibility to implement our own version of concepts that must be baked into the semantics of other smart contract languages.

\paragraph{\code{Coin} Module}
\begin{figure}
\begin{lstlisting}
resource T { value: u64 }
resource MarketCap { total_value: u64 }

// create a Coin with value=0
fun zero(): Coin::T
// Consume c and increment c_ref by its value
fun deposit(c_ref: &mut Coin::T, c: Coin::T)
// Decrement c_ref by amt, create Coin with value=amt
fun withdraw(c_ref: &mut Coin::T, amt: u64): Coin::T
// create Coin with value=amt, update MarketCap by amt. privileged operation
fun mint(amt: u64): Coin::T
\end{lstlisting}
\caption{A subset of the \code{Coin} module API.}

\label{fig:coin}
\end{figure}

The \code{Coin} module in \Cref{fig:coin} implements the native currency of the \libra platform by wrapping an integer \code{value} with a safe API.
This module clearly illustrates the value of combining linearity with traditional modularity.
Any user can create a coin worth zero, combine two coins with \code{deposit}, or split a single coin into two coins with \code{withdraw}.
The reader might wonder why \code{withdraw} chooses to mutate a \code{&mut Coin::T} rather than expose a functional API that takes two \code{Coin::T}'s and returns a new one.
The answer is that the reference parameter provides needed flexibility for updating a \code{Coin::T} object stored in the field of another resource.
For example: the functional API could not be used to update the \code{balance} field of the \code{Bank::T} resource in \Cref{fig:bank}.

The privileged \code{mint} operation allows a privileged user (the body contains a permission check) to create new currency and update the integer value stored in the \code{MarketCap} resource.
The body of the procedure also ensures that there is a single \code{MarketCap} resource in the system published at address $a$.

The conservation of currency in the \libra system can thus be stated as a local invariant of the \code{Coin} module: the sum of the \code{value}s of each \code{Coin::T} resource in the system must be equal to the \code{total_value} field of the \code{MarketCap} resource.
The combination of the strong encapsulation described in  \Cref{sec:bytecode} and resource safety guarantees defined in \Cref{sec:resource_safety} ensure that \code{Coin::T}'s cannot be created, destroyed, or modified by code outside the \code{Coin} module.
The \code{Coin} module needs only to ensure that \code{deposit} and \code{withdraw} conserve the value fields of the input/output field and that the \code{MarketCap} is updated appropriately whenever new coins are created.
This can be verified with straightforward local reasoning over the \code{Coin} module.

To the best of our knowledge, no other blockchain platform has made a rigorous argument for the conservation of its native currency. We note that there are known counterexamples for conservation such as the Scilla \scilcode{emit} bug described in \Cref{sec:overview}.

\paragraph{\code{Account} Module}
\begin{figure}
\begin{lstlisting}
resource T { bal: Coin::T, seq_num: u64, auth_key: vector<u8>, has_withdraw_cap: bool, ... }
// represents the permission to withdraw from account
resource WithdrawCap { account: address }
fun create(addr: address) // publish a T under addr
// debits the tx sender's balance by amt
fun withdraw_from_sender(amt: u64): Coin::T
// credits recipient's balance by c.value
fun deposit(recipient: address, c: Coin::T)
fun rotate_sender_auth_key(new_auth_key: bytes)
// acquire unique withdraw capability for sender
fun extract_sender_withdraw_cap(): Account::WithdrawCap
// debits the balance at cap.account by amt
fun withdraw(amt: u64, cap: &Account::WithdrawCap): Coin::T
\end{lstlisting}
\caption{A subset of the \code{Account} module API.}

\label{fig:account}
\end{figure}
A \libra user account at address $a$ is represented by storing an \code{Account::T} resource under $a$.
This resource holds all of the information a user needs to transact: a balance, a sequence number to prevent replay attacks, and an authentication key.
The module exposes procedures for withdrawing funds and rotating authentication keys (for the transaction sender only) and depositing funds (to any address).

In addition, the module allows the holder of the \code{WithdrawCap} capability to debit an account (similar to the \code{Bank::Credit} resource in \Cref{fig:bank}).
The implementation of \code{extract_sender_withdraw_cap} (not shown) uses the \code{has_withdraw_cap} field to ensure that there is at most one \code{WithdrawCap} for each account in the system.
An account whose \code{WithdrawCap} has been extracted can no longer use \code{withdraw_from_sender}---using the unique capability for account address $a$ is the only way to debit the balance of $a$.
Similar to native currency conservation in \code{Coin}, the uniqueness property for \code{WithdrawCap} can be established with simple local reasoning in the \code{Account} module.


In addition, using a resource to explicitly represent the permission to withdraw from an account provides significant flexibility for users of \libra.
A common use-case for contracts is placing preconditions on the funds stored in certain addresses; for example:
\begin{itemize}
\item Funds should only be sent to recipients in a whitelist
\item Funds should only be transferred after a certain date
\item Funds should only be withdrawn with the approval of a quorum
\end{itemize}
Each of these policies can be implemented by creating a resource that stores a \code{WithdrawCap} and restricts access accordingly.
Platforms like Ethereum~\cite{ethereum} support this use case by implementing payments with dynamic dispatch and  allowing contracts to override the default payment behavior, but (as we explained in \Cref{sec:overview}), this is a dangerous pattern because payment to an unknown address can call arbitrary code.
The capability-based approach of \code{WithdrawCap} enables custom payment logic without dynamic dispatch by moving the dynamism to the withdrawal code (known and trusted by the sender) instead of the recipient code (unknown and not trusted by the sender).

\paragraph{Deployment in the \libra Testnet}
The \move VM is currently running as part of the public \libra testnet\footnote{\url{https://developers.libra.org/docs/my-first-transaction}} that previews the functionality of the \libra payment system (expected to launch in 2020 pending regulatory approval).
The testnet supports a whitelist of transaction scripts that exercise all of the modules in the \move standard library.
To limit the scope and risk of the launch, the testnet does not currently allow users to publish new modules.
We hope that this will change in time as the \libra Association works with regulators to define appropriate safeguards for third-party publishing of smart contracts.




\subsection{\move Usage Outside of \libra}
\label{sec:move_usage}

The flexibility of the \move language and the modularity of the \move VM has facilitated external interest in/adoption of \move in both academic and industrial contexts.

\paragraph{Other Blockchains}
Solana\footnote{\redacted{\url{https://solana-labs.github.io/book/embedding-move.htm}}} is a multi-language blockchain that supports \move smart contracts and has publicly launched.
The dfinance\footnote{\redacted{\url{https://docs.dfinance.co/move_vm}}} and OpenLibra\footnote{\redacted{\url{https://www.openlibra.io/}}} blockchain platforms are using \move, but have not yet launched.
The Flow blockchain is an upcoming project from Dapper Labs, the creator of the popular CryptoKitties project in Ethereum. Dapper is considering using the Move bytecode as the compilation target for its Cadence source language.\footnote{\redacted{https://medium.com/dapperlabs/libra-and-flow-combining-resources-for-open-source-40530e53fa01}}
PRISM~\cite{yang2019prism} is an academic project that seeks to significantly enhance the scalability of existing blockchain platforms. PRISM has recently implemented smart contract support for both \move and the EVM~\cite{wang2020prism}.

\paragraph{Verification Tools}
The Move Prover~\cite{moveprover} implements a specification language and functional verification tool for \move. It has been used to specify and verify pre- and post-conditions for several of the \libra standard library modules.
A verification startup called Synthetic Minds verified key properties of an earlier version of the \code{Coin}/\code{Account} \libra modules and wrote/verified several new modules.\footnote{\url{https://synthetic-minds.com/pages/blog/blog-2019-09-11.html}}

\section{Related Work}
\label{sec:related_work}

\paragraph{Rust}
Mozilla's Rust~\cite{rust} language is used at large companies such as Google, Amazon, and Facebook.
Rust uses a clever affine type system to provide type and memory safety along with data-race freedom.
\move is strongly influenced by Rust, but there are several important differences:
\begin{enumerate}
\item Affine vs linear: Rust structs can be silently discarded, but \move resources must be explicitly \unpackCmd'ed. This is a profound difference that is required for resource safety, but presents complications in the design of many language features (see \Cref{sec:language_design}).
\item A subset of \move without the persistent global state $\globalstates$ is superficially similar to a linear variant of Rust with many features removed (e.g., references in structs, heap allocation, collections, traits, generics, concurrency, \code{unsafe}). However, the persistent global state of \move gives programmers access to a shared, \emph{mutable} global state. There is no equivalent feature in Rust, and the restrictions of Rust's borrow checker make it impossible to emulate this feature in safe Rust. A key contribution of \move is a representation of global state expressive enough to represent complex smart contracts, yet simple enough to preserve the safety guarantees we desire.
\item Rust is a source language that compiles to an executable representation, whereas \move bytecode is itself an executable representation. The key difference is that the guarantees enforced by the \move language hold directly on the executable representation (no need to trust a compiler) and continue to hold when \move programs are linked against untrusted code. This property is a requirement for smart contracts, which are deployed in the open and must tolerate arbitrary interactions with untrusted code. Thus, even if the Rust language (or a subset) had exactly the properties we wanted, it would not be usable as a smart contract language.
\end{enumerate}

\paragraph{Substructural Type Systems and Ownership Types}
Rust is the most mainstream language with a substructural type system, but it follows in the footsteps of other cleverly designed languages such as Cyclone~\cite{DBLP:conf/pldi/GrossmanMJHWC02}, Clean~\cite{Smetsers:1993:GSD:647364.725667}, Pony~\cite{DBLP:conf/agere/ClebschDBM15}, and Alms~\cite{DBLP:conf/popl/TovP11}.
A related line of work involves ownership type systems~\cite{DBLP:series/lncs/ClarkeOSW13} for controlling aliasing in languages with reference semantics.
A common theme in both areas is leveraging types for safe memory management to avoid undefined behavior due to data races or accessing deallocated/uninitialized memory, but without relying on garbage collection.

\move also builds on this tradition, but our usage of linearity is broader and more ambitious than memory management: linear resources are a natural abstraction for digital money and other programmer-defined assets.
The resource safety guarantee from \Cref{sec:resource_safety} is a novel \emph{semantic} conservation property similar to (e.g.) conservation of mass, stated in a way that is independent from any particular enforcement mechanism (e.g., linear type or dynamic checks).

\paragraph{Linear Logic}
The modern era of substructural logics and associated type systems began with Girard's linear logic~\cite{linear_logic}.
In his early explanation of its resource sensitivity, Girard used an intuitive representation of money.
In that account, each number of fixed-value coins has a different type:
two coins would have type $C \otimes C$ and three coins $C \otimes C \otimes C$.
While this encoding of money illustrates a fundamental difference between linear and intuitionistic logic, the approach resembles giving each individual integer a different type.  (If each integer has a different type, then any straightforward type system would
prohibit a single addition function from being used to add arbitrary pairs of integers.)
Instead of segregating different monetary values in different types,
we need resource types (such as the \code{Coin} module in \Cref{sec:core_modules})
that create coins of any value which cannot be duplicated or destroyed outside the control of their defining module.
Thus, while \move draws on Girard's key insight relating linearity to assets, \move resources give programmers both
provable resource safety \emph{and} flexibility in representing money-like types.

\paragraph{Account-based Blockchain Languages}
\emph{Account-based} blockchain languages mimic a classic bank ledger by representing global state as a map from account addresses to integer balances and exposing language primitives for debiting one balance and crediting another.
We discussed two prominent executable account-based languages~\cite{scilla, ethereum} in \Cref{sec:overview}; many others like IELE~\cite{DBLP:conf/fm/KasampalisGMSZF19}, Agoric JS~\cite{agoric}, Michelson~\cite{michelson}, and Pact~\cite{pact} have been proposed in the past few years.
\move implements a more expressive variant of the account-based model
where account addresses are associated with a direct representation of money
(programmable resources) instead of an indirect one (integer balances).

\paragraph{UTXO-based Blockchain Languages}
\emph{UTXO} (unspent transaction output) blockchain languages represent the global state as a set of (authentication policy, amount) pairs.
Programs transfer money by satisfying the authentication policy of one or more input UTXOs and creating a set of fresh output UTXOs whose amounts sum to the amounts included in the inputs.
Program execution removes the input UTXOs from the state and adds the fresh ones.
This model was pioneered by Bitcoin's~\cite{nakamoto} Script~\cite{bitcoin_script} language, and has been adopted by a few more recent languages such as Simplicity~\cite{simplicity} and Plutus~\cite{plutus}.
Though UTXOs are a good choice for a platform with a single native currency and limited programmability, they are cumbersome to use for general-purpose state changes. We feel that \move's resources are a more flexible approach to implementing diverse financial assets with customizable behaviors.

\paragraph{Blockchain Source Languages with Linear Types}
Flint~\cite{DBLP:journals/corr/abs-1904-06534} and Obsidian~\cite{DBLP:conf/icse/Coblenz17,DBLP:journals/corr/abs-1909-03523} are contract programming languages that use linear types as an explicit representation of assets.
These languages enforce linearity at the source level, but compile to an executable representation (EVM bytecode~\cite{ethereum} and a Java subset used by Hyperledger Fabric~\cite{hyperledger}, respectively) without the same protections.
Nomos~\cite{das2019resourceaware} uses session types to achieve even stronger static protections, but also does not consider the problem of applying the type system to an executable blockchain representation.
The Cadence source language\footnote{\url{https://docs.onflow.org/docs/cadence}} has linear types and is considering \move bytecode as a compilation target (see \Cref{sec:move_usage}).

The distinguishing feature of \move is an executable bytecode representation with resource safety guarantees for \emph{all} programs.
This is crucially important given the open deployment model for contracts---recall that any contract must tolerate arbitrary interactions with untrusted code.
Source-level linearity has limited value if it can be violated by untrusted code at the executable level (e.g., untrusted code that duplicates a source-level linear type).
\section{Conclusion}
\label{sec:conclusion}

We have introduced language support for a specific form of linear resource types, formalized corresponding semantic resource safety properties,
and proved that they hold for successful concrete execution of \move programs.
The language construct provides a safe abstraction for currency-like values in the \move language, as illustrated by example in \Cref{sec:overview}
and through more extensive experience summarized in \Cref{sec:experience}.
In future work, we plan to describe and formalize the \move bytecode verifier, which involves  interesting and novel static analyses for ensuring type, resource, and reference safety invariants.
One goal of the verifier is to ensure progress for concrete execution of \move programs in the checking semantics of the present paper, complementing the safety guarantees proved here.
Overall, we believe the semantic guarantees and successful programming experience presented in this paper suggest that the language design and implementation provide
better  language support and more effective design patterns for the growing range of resource-sensitive applications of blockchain and related platforms.

\bibliography{bibfile}


\end{document}